\definecolor{darkgreen}{rgb}{0.0,0.7,0.0}
\newenvironment{aj}{\noindent\color{magenta} Artur:} {}
\newenvironment{mg}{\noindent\color{blue} Micha?:} {}
\newcommand{\cnt}[1]{ \text{\#}(#1) }
\newcommand{\procfont}[1]{\textnormal{\textsf{#1}}}
\providecommand{\Ocomp}{{\mathcal O}}
\newtheorem{lemma}{Lemma}
\newtheorem{theorem}{Theorem}
\newtheorem{definition}{Definition}
\newtheorem{fact}{Fact}
\newlength{\figurewidth}
\newlength{\smallfigurewidth}
\begin{document}

\title
{\large
\textbf{Towards Better Compressed Representations \footnote{Work supported under National Science Centre, Poland project number 2017/26/E/ST6/00191.}
}
}

\author{%
	Micha\l{} Ga\'nczorz$^{\ast}$\\[0.5em]
	{\small\begin{minipage}{\linewidth}\begin{center}
				\begin{tabular}{c}
					$^{\ast}$University of Wroc{\l}aw,
					Institute of Computer Science\\
					ul.~Joliot-Curie~15\\
					Wroc{\l}aw, PL50383, Poland\\
					\url{mga@cs.uni.wroc.pl} \\
				\end{tabular}
			\end{center}\end{minipage}}
		}

\maketitle
\thispagestyle{empty}

\vspace{5pt}
\begin{abstract}
We introduce the problem of computing 
a parsing where each phrase is of length at most $m$ and which minimizes the zeroth
order entropy of parsing.
Based on the recent theoretical results we devise a heuristic for 
this problem.
The solution has straightforward application in succinct text representations
and gives practical improvements.
Moreover, the proposed heuristic yields structure whose size can be bounded
both by $|S|H_{m-1}(S)$ and by $|S|/m(H_0(S) + \cdots + H_{m-1}(S)$,
where $H_{k}(S)$ is the $k$-th order empirical entropy of $S$.
We also consider a similar problem in which the 
first-order entropy is minimized.
\end{abstract}

\Section{Introduction}
As the amount stored data grows exponentially,
over the last decade we have seen a rapid growth of importance
of compressed data structures, both in practical applications and in theoretical research.
In this paper we take a closer look at one of the simplest compressed data structure,
which solves the static random access problem,
namely given a text $S$ builds its compressed representation
supporting only one operation: \procfont{access}($i$) which returns $i$-th element,~$S[i]$.
This structure is already useful in many applications, such as databases or compressed~RAM.

The two main paradigms in lossless data compression are:
dictionary compression, e.g.\ LZ77 or grammar compression, and entropy-based compression, e.g.\ PPM.
It is generally acknowledged that the former works better for highly repetitive data,
while the latter for data which does not have many repeating substrings.

There are many theoretical results for compressed representations using the entropy paradigm~\cite{sadakane2006squeezing,FerraginaV07SimpStat,GonzalezNStatistical,FerraginaMMN07CompressedRepresentation},
there are even structures that allow not only to read but also to modify the data~\cite{NavarroOptimalDynamic,GrossiDynamicIndexes,JanssonCRAM}.
For the static random access problem,
those structures achieve the following bounds:
for a string $S$ over the alphabet of size $\sigma$
the space is bounded by
$|S|H_k(S) + \Ocomp{ \left( \frac{|S|}{\log_\sigma |S|}\cdot 
	\left(k \log \sigma + \log \log |S| \right) \right) }$
and the \procfont{access} query takes $\Ocomp(1)$ time.
It is also worth noting that some of these structures also allow to read 
$\Ocomp(\log_\sigma |S|)$ bits in $\Ocomp(1)$ time, this is useful in sequential read,
which is often used in RAM.
What is somehow surprising, is that the analyses of space usage of those structures
are usually done in the same way:
first, it is shown that a structure induces a parsing of the input string,
(sometimes the parsing is explicitly constructed in the structure,
sometimes it is defined implicitly);
then it is shown that the size of data structure is dominated by  the zeroth order entropy of this parsing, i.e.\ zeroth order entropy of string made by replacing parsing phrases with new letters;
lastly, this zeroth-order entropy is related to the $k$-th order entropy of input string $S$.

%

A recently shown theorem strengthens and generalizes
previous estimations of parsing's zeroth-order entropy in terms of input's $k$-th order entropy~\cite{FerraginaV07SimpStat,KosarajuManzini99}.

\begin{theorem}[{\cite[Theorem~1]{entropyBoundsUnpublished}}]
\label{thm:thm1}
Let $S$, be a string, $Y_S$ its parsing, $n = |S|$. Then: 
$$
|Y_S| H_0(Y_S)
\leq 
|S|H_k(S) + |Y_S|k\log \sigma + |L| H_0(L),
$$
where $L$ is a string over alphabet $1\ldots |Y_S|$ whose letters are lengths of factors of $Y_S$.
Moreover, if $|Y_S| = \Ocomp(|S|/\log_\sigma |S|)$
then this bound becomes:
$$
|Y_S| H_0(Y_S)
\leq 
|S|H_k(S) + \Ocomp{ \left( \frac{|S|}{\log_\sigma |S|}\cdot 
	\left(k \log \sigma + \log \log |S| \right) \right) }.
$$
\end{theorem}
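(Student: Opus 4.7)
The plan is to exhibit a sub-probability distribution $q$ on strings and invoke Gibbs' inequality: for any such $q$,
$$|Y_S|H_0(Y_S) \;\le\; \sum_{i=1}^{|Y_S|} -\log q(y_i),$$
where $y_1,\ldots,y_{|Y_S|}$ are the phrases of the parsing in the order they appear in $S$. I will choose $q$ so that this cross-entropy splits cleanly into the three summands on the right-hand side of the theorem.

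The candidate is $q(w) \;=\; q_L(|w|)\cdot \sigma^{-k}\cdot \sum_{c\in\Sigma^k} p_k(w\mid c)$, where $q_L$ is the empirical length distribution of $L$ and $p_k(w\mid c)$ denotes the probability of emitting $w$ character by character from the $k$-th order Markov model of $S$ starting in context $c$. First I would verify that $\sum_w q(w)=1$, which follows by swapping sums and using $\sum_{|w|=\ell}p_k(w\mid c)=1$ for each $c$. Then for each phrase $y_i$, letting $c_i$ be its actual preceding $k$-character context in $S$, I would split
$$-\log q(y_i) \;=\; -\log q_L(|y_i|) \;+\; k\log\sigma \;+\; \log\Bigl(1\bigm/\sum_c p_k(y_i\mid c)\Bigr)$$
and bound the last summand from above by $-\log p_k(y_i\mid c_i)$ using positivity. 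Summing over $i$, the three pieces collapse to $|L|H_0(L)$, $|Y_S|k\log\sigma$, and $|S|H_k(S)$ respectively; the last equality holds because $p_k(y_i\mid c_i)$ factors along the characters of $y_i$, and stitched across consecutive phrases these factors reassemble the character-by-character product that defines $|S|H_k(S)$.

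For the moreover clause, once the first inequality is in hand the only real work is bounding $|L|H_0(L)$. Since $L$ has $|Y_S|$ symbols whose integer values sum to $|S|$ and $|Y_S|=\Ocomp(|S|/\log_\sigma|S|)$, the mean length is $\Theta(\log_\sigma|S|)$; the maximum-entropy distribution on positive integers with fixed mean is geometric, yielding $H_0(L)=\Ocomp(\log(|S|/|Y_S|))=\Ocomp(\log\log|S|)$. Combined with $|Y_S|k\log\sigma=\Ocomp((|S|/\log_\sigma|S|)\cdot k\log\sigma)$, this gives exactly the claimed $\Ocomp((|S|/\log_\sigma|S|)(k\log\sigma+\log\log|S|))$ bound.

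The step I expect to be the main obstacle is the context-stitching inside the third piece. One must check that taking $c_i$ to be the $k$ characters of $S$ immediately preceding the $i$-th phrase makes the per-character conditional probabilities appearing in the factorization of $p_k(y_i\mid c_i)$ identical to those summed in the standard definition of $|S|H_k(S)$, so that concatenating phrases covers every position of $S$ exactly once, with only a $\Ocomp(k\log\sigma)$ boundary correction from the first phrase that is absorbed into the $|Y_S|k\log\sigma$ overhead. Everything else is routine manipulation of empirical entropies.
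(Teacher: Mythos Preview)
Your overall strategy---apply Gibbs' inequality (Lemma~\ref{lemma_prob} in the paper) with a sub-distribution that factors into a length part and a character part---is exactly the route the paper describes. Where you diverge is in the character model: the paper takes $p(y)=p_{\text{len}}(|y|)\prod_i p_i(a_i)$ with $p_i(a_i)=1/\sigma$ for $i\le k$ and $p_i(a_i)=\cnt{a_{i-k}\cdots a_i}/\cnt{a_{i-k}\cdots a_{i-1}}$ for $i>k$; that is, it pays a flat $\log\sigma$ on the first $k$ letters of \emph{every} phrase and switches to the $k$-th order empirical model only from position $k{+}1$ onward. You instead average the $k$-th order model over all initial $k$-gram contexts and then lower-bound the mixture by the single term with the true preceding context~$c_i$.

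This difference is not cosmetic; it creates a genuine gap at the first phrase. For $i\ge 2$ your stitching is fine, but $y_1$ has no preceding $k$-gram in $S$, and the ``$\Ocomp(k\log\sigma)$ boundary correction'' you invoke is not valid in general. Concretely, if $S[1]$ never occurs at any position $>k$ in $S$ (take $S=ab^{n}$, $k\ge 1$), then $p_k(y_1\mid c)=0$ for every $c\in\Sigma^k$, so $q(y_1)=0$ and your bound is vacuous. Even when the sum is positive, the excess you incur on the first $k$ positions equals $-\log\sum_{c}p_k(a_1\cdots a_k\mid c)$, which is minus the log of a column sum of the $k$-step empirical transition matrix; that column sum need not be $\ge 1$, and the resulting error is not controlled by $k\log\sigma$. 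The paper's choice sidesteps all of this cleanly: the first $\min(k,|y_i|)$ letters of each phrase contribute at most $|Y_S|k\log\sigma$ in total, and the remaining letters contribute a \emph{subset} of the nonnegative per-position summands that define $|S|H_k(S)$, hence at most $|S|H_k(S)$---no preceding context for $y_1$ is ever needed, and the exact (non-asymptotic) inequality falls out directly. Your argument for the ``moreover'' clause via the maximum-entropy geometric distribution on lengths is correct.
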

The aforementioned structures construct a parsing of $S$ into 
$\Ocomp(|S|/\log_\sigma|S|)$ phrases of fixed length;
most of them find a naive parsing into phrases
of length $\Theta(\log_\sigma |S|)$.
For example, the simplest one~\cite{FerraginaV07SimpStat}
first finds a parsing into equal-length phrases of length $\frac{1}{2} \log_\sigma |S|$,
assigns each phrase a prefix-free code and concatenates the codes.

Theorem~\ref{thm:thm1} shows that using Huffman codes for any reasonable parsing
yields a~data structure of size $|S|H_k(S)$ (plus the dictionary, which is often small).
It is natural to ask, can we get better upper bounds if we can \emph{choose} the parsing,
it turns out that by recent result, this is indeed the case:
\begin{theorem}[{\cite[Theorem~2]{entropyBoundsUnpublished}}]
	\label{thm:mean_entropy}
	Let $S$ be a string over alphabet $\sigma$.
	Then for any integer $l$
	we can construct a parsing $Y_S$ of size $|Y_S| \leq
	\left\lceil\frac{|S|}{l}\right\rceil + 1$ satisfying:
	$$
	|Y_S| H_0(Y_S)
	\leq
	\frac{|S|}{l} \sum_{i=0}^{l-1} H_i(S) + \Ocomp(\log |S|).
	$$
	All phrases except the first and last one have length $l$,
	and all phrases lengths are bounded by $l$.
\end{theorem}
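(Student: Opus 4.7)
The plan is to exhibit the required parsing via an averaging argument over $l$ candidate parsings, one for each possible ``phase''. For each $j \in \{0, 1, \ldots, l-1\}$, let $Y^{(j)}$ be the parsing that begins with the (possibly empty) prefix phrase $S[1..j]$ and then cuts the remaining suffix $S[j+1..|S|]$ into consecutive phrases of length $l$, with a possibly shorter final phrase. Each $Y^{(j)}$ has at most $\lceil |S|/l \rceil + 1$ phrases and satisfies the stated length constraints, so it suffices to prove that some $j$ yields a $Y^{(j)}$ attaining the desired entropy bound.

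The core inequality I aim to establish is
$$
\sum_{j=0}^{l-1} |Y^{(j)}| H_0(Y^{(j)}) \;\le\; |S| \sum_{i=0}^{l-1} H_i(S) + \Ocomp(l \log |S|),
$$
after which dividing by $l$ and picking the minimizing $j$ via pigeonhole finishes the proof. A first simplification is to strip out the at-most-two short boundary phrases of each $Y^{(j)}$: each such essentially unique phrase contributes only $\Ocomp(\log |Y^{(j)}|) = \Ocomp(\log |S|)$ bits to $|Y^{(j)}| H_0(Y^{(j)})$, so their total contribution over all $j$ is $\Ocomp(l \log |S|)$ and is absorbed into the error. This reduces the task to bounding $\sum_j |Z^{(j)}| H_0(Z^{(j)})$, where $Z^{(j)}$ is the sub-sequence of full length-$l$ blocks of $Y^{(j)}$.

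I then apply the log-sum inequality (equivalently, convexity of $x \log x$). Crucially, every length-$l$ substring of $S$ appears as a full block in \emph{exactly one} $Z^{(j)}$: the window starting at position $p$ lands in $Z^{((p-1) \bmod l)}$. Writing $c_w^{(j)}$ for the count of $w$ as a block in $Z^{(j)}$, the aggregates $c_w = \sum_j c_w^{(j)}$ and $M = \sum_j |Z^{(j)}| = |S|-l+1$ are respectively the total occurrence count of $w$ as a length-$l$ substring of $S$ and the number of such windows. Applying log-sum termwise to each $w$ gives
$$
\sum_j c_w^{(j)} \log \frac{|Z^{(j)}|}{c_w^{(j)}} \;\le\; c_w \log \frac{M}{c_w},
$$
and summing over $w$ yields $\sum_j |Z^{(j)}| H_0(Z^{(j)}) \le \sum_w c_w \log(M/c_w)$, i.e.\ $M$ times the empirical zeroth-order entropy of the multiset of overlapping length-$l$ windows of $S$.

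The final step is a chain-rule identity for empirical entropies that rewrites this overlapping block-entropy as $\sum_{i=0}^{l-1} |S| H_i(S)$ modulo boundary corrections. Writing $\log(M/c_w) = \sum_{i=0}^{l-1} \log\bigl(c_{w[1..i]}/c_{w[1..i+1]}\bigr)$ with the convention $c_\varepsilon := M$, and then summing over length-$l$ windows $w$ telescopes level by level into the defining sums of $H_i(S)$; mismatches arise only from windows that touch either end of $S$, affecting at most $\Ocomp(1)$ contexts per level and contributing a total of $\Ocomp(l \log |S|)$. I expect the careful bookkeeping of these boundary effects and of the off-by-one discrepancies between the overlapping-substring counts and the counts used in the standard definition of $H_i(S)$ to be the main technical obstacle; the shift-averaging skeleton, the log-sum step, and the handling of the short boundary phrases are all conceptually routine once the setup is fixed.
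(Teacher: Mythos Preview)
Your proof is correct and follows the same shift-averaging skeleton the paper (and the cited source) uses, but organizes the convexity step differently. The paper bounds each $|Y^{(j)}|H_0(Y^{(j)})$ \emph{individually} via Lemma~\ref{lemma_prob} with the explicit distribution $p(y)=p_{\text{len}}(|y|)\prod_i p_i(a_i)$, where $p_i(a_i)=\cnt{a_1\cdots a_i}/\cnt{a_1\cdots a_{i-1}}$; because this product telescopes to $\cnt{y}/|S|$, the quantity $-\log p(y)$ is already the chain-rule sum, and summing over the $l$ shifts collapses directly to $|S|\sum_{i<l}H_i(S)$ plus the length-entropy term $|L|H_0(L)=\Ocomp(\log|S|)$. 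You instead first merge the $l$ parsings via the log-sum inequality into the single overlapping-window entropy $\sum_w c_w\log(M/c_w)$, and only then apply the chain rule. Both routes invoke the same convexity of $x\log x$ and need the same boundary bookkeeping; the paper's route skips your intermediate merge and handles the short boundary phrases through the $p_{\text{len}}$ factor rather than by explicit stripping, while your route makes it more visible that the bound is really the zeroth-order entropy of the multiset of all length-$l$ windows.
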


As $\log \sigma \geq H_i(S) \geq H_j(S)$ for $i \leq  j$,
bound from Theorem~\ref{thm:mean_entropy} is smaller than the one from Theorem~\ref{thm:thm1}.
Moreover, in practice entropies tend to get significantly smaller with $i$.
Observe that there are at most $l-1$ parsings satisfying conditions of Theorem~\ref{thm:mean_entropy},
so it basically says that one of them satisfies the improved bound.

While Theorem~\ref{thm:mean_entropy} gives us meaningful insight that
some parsings can beat the bound of Theorem~\ref{thm:thm1},
it does not account for significant practical gains.
What we are really interested in is a parsing which
will have small entropy and will be (practically) useful
in structures for compressed representation.

The first observation is that existing structures can be generalized
so that they support parsings whose phrases are of length at most 
$m=\Ocomp(\log_\sigma|S|)$,
call such parsing $m$-bounded,
instead of requiring them to be equal.
This motivates the following problem:

\begin{definition}[Minimum Entropy Bounded-Factor Parsing Problem]
\label{def:problem_zero}
Given an integer $m$ and a string $S$ compute its $m$-bounded parsing $Y_S$
minimizing $|Y_S|H_0(Y_S)$ over all $m$-bounded parsings.
\end{definition}

Computing the optimal solution to this problem seems difficult,
as entropy is a~global measure and decisions concerning the parsing cannot be done locally;
moreover, entropy minimization problems tend to be hard.
We set a more realistic goal of finding an efficient and simple heuristic.

The main idea of the heuristic comes from the proofs of Theorem~\ref{thm:thm1} and \ref{thm:mean_entropy}:
as computing the exact entropy is difficult, a simpler to compute upper bound $H'(Y_S)\geq H_0(Y_S)$ on the entropy of the parsing.
Then instead of trying to compute a parsing minimizing $H(Y_S)$,
we compute one minimizing $H'(Y_S)$.
The definition of $H'$ is simple enough so that computing it can be done in almost linear time.

The main drawback of aforementioned structures is that the additional
factor $\Ocomp{ \left( |S|
	\left(k \log \sigma + \log \log |S| \right)/\log_\sigma |S| \right) }$
grows significantly with $k$.
As a solution, Grossi et al.~\cite{GrossiDynamicIndexes}
encoded the parsing with a first order entropy coder;
the size of such data structure is at most
$|S|H_k(S) + \Ocomp\left(|S|\log \log_\sigma |S|/\log_\sigma|S|\right)$.
In fact, any parsing of size $\Ocomp(|S|/\log_\sigma |S|)$ and which factors are of length 
at least $k$ satisfies this bound~\cite{GrossiDynamicIndexes}.

Even though neither of~\cite{GrossiDynamicIndexes,entropyBoundsUnpublished} give explicit
bounds for the parsing encoded with first order entropy,
it is easy to apply methods
from~\cite{GrossiDynamicIndexes,entropyBoundsUnpublished} and obtain the
versions of Theorem~\ref{thm:thm1} and Theorem~\ref{thm:mean_entropy}:
\begin{theorem}
	\label{thm:thm2}
	Let $S$, $|S|=n$ be a string, $Y_S = y_1 y_2 \cdots y_{|Y_S|}$ its parsing,
	where $|y_i| \geq k$, for all $i$.
	Then: 
	$$
	|Y_S| H_1(Y_S)
	\leq 
	|S|H_k(S) + |L| H_0(L),
	$$
	where $L$ is a string over alphabet $1\ldots |Y_S|$ which letters are lengths of factors of $Y_S$.
	Moreover, if $|Y_S| = \Ocomp(|S|/\log_\sigma |S|)$
	then this bound becomes:
	$$
	|Y_S| H_1(Y_S)
	\leq 
	|S|H_k(S) + \Ocomp{ \left( \frac{|S|\cdot \log \log |S|}{\log_\sigma |S|} \right) }.
	$$
\end{theorem}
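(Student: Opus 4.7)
\medskip

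\noindent\textbf{Proof plan.} The strategy is to mimic the coding argument behind Theorem~\ref{thm:thm1} but to let the previous phrase supply the $k$-context; the hypothesis $|y_i|\geq k$ is precisely what makes this viable and what allows us to drop the additive $|Y_S|k\log\sigma$ term. Using the standard decomposition $|Y_S|H_1(Y_S)=\sum_a |Y_S^{(a)}|H_0(Y_S^{(a)})$, where $Y_S^{(a)}$ is the subsequence of phrases immediately following an occurrence of phrase~$a$ in $Y_S$, Gibbs' inequality yields that for any family of conditional distributions $\{q_a\}$ over phrases,
\[
|Y_S|H_1(Y_S) \;\leq\; \sum_{i=2}^{|Y_S|}\log\frac{1}{q_{y_{i-1}}(y_i)}.
\]
Hence it suffices to exhibit a single such family whose total code length is at most $|S|H_k(S)+|L|H_0(L)$.

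I would factor $q_a(b)=q_{\text{len}}(|b|)\cdot q_{\text{char}}(b\mid a,|b|)$ and set $q_{\text{len}}(\ell)=\cnt{\ell}/|L|$, the empirical distribution of phrase lengths read off from $L$; summed over $i$ this contributes exactly $|L|H_0(L)$. For the content I would use the last $k$ symbols $u$ of $a$ (available because $|a|\geq k$) as the initial $k$-context and define
\[
q_{\text{char}}(b\mid a,|b|) \;=\; \prod_{j=1}^{|b|}\frac{\cnt{w_j b[j]}}{\cnt{w_j}},
\]
where $w_j$ is the length-$k$ window ending at position $j-1$ of the augmented string $u\cdot b[1..j-1]$. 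The point of the construction is the telescoping observation that, as $i$ ranges over $2,\ldots,|Y_S|$ and $j$ over $1,\ldots,|y_i|$, the pairs $(w_j,b[j])$ enumerate exactly the context/next-character pairs at every position $p>|y_1|$ of $S$: because the last $k$ symbols of $y_{i-1}$ are exactly the $k$ symbols preceding $y_i$ inside the unparsed text, the windows traversed by the product agree position-by-position with the native $k$-contexts of $S$. Summing $-\log$ therefore gives at most $|S|H_k(S)$, and together with the length contribution this yields the first-half inequality.

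For the asymptotic form, the only additional ingredient is the bound $|L|H_0(L)=\Ocomp(|S|\log\log|S|/\log_\sigma|S|)$, which follows because the alphabet of $L$ has size at most $m=\Ocomp(\log_\sigma|S|)$, whence $H_0(L)\leq\log m=\Ocomp(\log\log|S|)$, and $|L|=|Y_S|=\Ocomp(|S|/\log_\sigma|S|)$. The main obstacle, and the place where genuine care is required, is the boundary: the first phrase $y_1$ has no predecessor and hence no $k$-context, and the final positions of $S$ may produce a $k$-context with no successor. I would handle $y_1$ by spending at most $m\log\sigma=\Ocomp(\log|S|)$ extra bits (absorbed into the asymptotic error) and adopt the usual convention that the first $k$ characters of $S$ are counted ``for free'' in $H_k(S)$; the degenerate-last-window issue disappears once one observes that each $w_j$ appearing in the product always sits immediately before a genuine character of $S$, so $q_{\text{char}}$ is a bona fide conditional distribution throughout.
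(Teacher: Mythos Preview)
Your proposal is correct and follows essentially the same route as the paper: decompose $|Y_S|H_1(Y_S)=\sum_{y'} |T_{y'}|H_0(T_{y'})$, apply Lemma~\ref{lemma_prob} in each context with a distribution that factors as (empirical length probability)$\times$(product of $k$-order empirical conditionals, seeded by the last $k$ letters of the predecessor), and observe that the resulting code lengths sum to $|S|H_k(S)+|L|H_0(L)$.

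Two small remarks. First, your worry about $y_1$ is unnecessary: having no predecessor, $y_1$ lies in no $Y_S^{(a)}$ and so contributes nothing to $|Y_S|H_1(Y_S)$; you do not need to spend any bits on it, and the positions $1,\ldots,|y_1|$ of $S$ are simply absent from the character sum, which only helps the inequality (so the first bound holds exactly, without any $\Ocomp(\log|S|)$ slack). Second, for the asymptotic clause you invoke that the alphabet of $L$ has size at most $m=\Ocomp(\log_\sigma|S|)$, but the theorem only hypothesises $|Y_S|=\Ocomp(|S|/\log_\sigma|S|)$, not that phrases are $m$-bounded. To get $|L|H_0(L)=\Ocomp(|S|\log\log|S|/\log_\sigma|S|)$ in this generality, upper-bound $|L|H_0(L)$ by the length of an Elias code for the integers $|y_1|,\ldots,|y_{|Y_S|}|$, which is $\Ocomp\bigl(|L|\log(|S|/|L|)\bigr)$; since $x\mapsto x\log(|S|/x)$ is increasing for $x\le |S|/e$, this is $\Ocomp(|S|\log\log|S|/\log_\sigma|S|)$.
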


\begin{theorem}
	\label{thm:mean_entropy_2}
	Let $S$ be a string over alphabet $\sigma$.
	Then for any integer $l$
	we can construct a parsing $Y_S$ of size $|Y_S| \leq
	\left\lceil\frac{|S|}{l}\right\rceil + 1$ satisfying:
	$$
	|Y_S| H_1(Y_S)
	\leq
	\frac{|S|}{l} \sum_{i=l}^{2 l-1} H_i(S) + \Ocomp(\log |S|).
	$$
	All phrases except the first and last one have length $l$,
	and all phrases lengths are bounded by $l$.
\end{theorem}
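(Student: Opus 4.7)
The plan is to adapt the averaging argument underlying Theorem~\ref{thm:mean_entropy}, but at the level of \emph{consecutive pairs} of phrases, which is what $H_1(Y_S)$ actually sees. For each offset $j \in \{0,1,\dots,l-1\}$ define a parsing $P_j$ whose first phrase has length $j$ (empty if $j=0$), whose internal phrases have length exactly $l$, and whose last phrase has length at most $l$. Each $P_j$ has size at most $\lceil |S|/l\rceil+1$ and meets the structural constraints of the statement; the goal is to prove
\[
\sum_{j=0}^{l-1} |P_j|\, H_1(P_j) \;\le\; |S|\sum_{k=l}^{2l-1} H_k(S) + \Ocomp(l \log |S|),
\]
so that pigeonhole over $j$ selects an offset for which $P_j$ attains the claimed bound with an additive $\Ocomp(\log |S|)$ error after division by $l$.

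Let $n^{(j)}_\alpha$ and $n^{(j)}_{\alpha\beta}$ denote the number of length-$l$ phrases, respectively consecutive phrase pairs, appearing in $P_j$; the short boundary phrases contribute at most $|L| H_0(L) = \Ocomp(\log |S|)$ per parsing to $|P_j| H_1(P_j)$, as in the proof of Theorem~\ref{thm:mean_entropy}, because $L$ takes only three distinct values. By the definition of first-order entropy,
\[
|P_j|\, H_1(P_j) \;=\; \sum_{\alpha,\beta} n^{(j)}_{\alpha\beta} \log \frac{n^{(j)}_\alpha}{n^{(j)}_{\alpha\beta}} + \Ocomp(\log |S|).
\]
Because the $l$ offsets place phrase boundaries at every starting position of $S$ exactly once, $\sum_j n^{(j)}_{\alpha\beta} = n_{\alpha\beta}$ and $\sum_j n^{(j)}_\alpha = n_\alpha$, where $n_w$ denotes the number of occurrences of $w$ in $S$. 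Applying the log-sum inequality term-by-term in $(\alpha,\beta)$ then yields
\[
\sum_{j=0}^{l-1} |P_j|\, H_1(P_j) \;\le\; \sum_{\alpha,\beta} n_{\alpha\beta} \log \frac{n_\alpha}{n_{\alpha\beta}} + \Ocomp(l \log |S|).
\]

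To match the right-hand side to $|S|\sum_{k=l}^{2l-1} H_k(S)$, I would invoke the chain rule. Writing $\beta = b_0 b_1 \cdots b_{l-1}$ telescopes
\[
\log \frac{n_\alpha}{n_{\alpha\beta}} \;=\; \sum_{i=0}^{l-1} \log \frac{n_{\alpha b_0 \cdots b_{i-1}}}{n_{\alpha b_0 \cdots b_i}},
\]
where the $i=0$ numerator is taken to be $n_\alpha$. Interchanging summations and marginalising over $b_{i+1},\dots,b_{l-1}$ identifies the $i$-th layer with $|S|H_{l+i}(S)$, up to an $\Ocomp(l)$ slack from substrings running off the end of $S$. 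The main technical obstacle I anticipate is precisely this layered bookkeeping --- folding the short-boundary contribution to each $H_1(P_j)$ together with the chain-rule boundary slack into a single $\Ocomp(l \log |S|)$ additive error on the right, so that after dividing by $l$ and invoking pigeonhole the advertised $\Ocomp(\log |S|)$ term survives. Once verified, the chosen $P_j$ is the required parsing.
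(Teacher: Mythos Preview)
Your proposal is correct and follows the same high-level strategy as the paper: consider the $l$ shifted fixed-length parsings, sum the costs over all offsets, show the total is $|S|\sum_{i=l}^{2l-1}H_i(S)+\Ocomp(l\log|S|)$, and pigeonhole. The difference is purely in how the per-parsing $H_1$ is tied to the global substring counts. The paper invokes Lemma~\ref{lemma_prob} inside each parsing, with the explicit distribution $p(y,y')=\tfrac{1}{m}\cdot\tfrac{\cnt{y'y}}{\cnt{y'}}$ (whose telescoping product is exactly your chain-rule identity), and then the sum over offsets collapses because every occurrence of a length-$2l$ word in $S$ is a consecutive pair in exactly one $P_j$. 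You instead keep the exact $H_1(P_j)$, aggregate across $j$ using the log-sum inequality on $\bigl(n^{(j)}_{\alpha\beta},n^{(j)}_{\alpha}\bigr)$, and only then telescope. Since Lemma~\ref{lemma_prob} is precisely Gibbs' inequality (equivalently, log-sum), the two routes produce the same upper bound $\sum_{\alpha,\beta} n_{\alpha\beta}\log\tfrac{n_\alpha}{n_{\alpha\beta}}$; the paper's packaging is more modular with respect to its existing machinery, while yours makes the ``why does averaging over offsets help'' step visibly a convexity statement.

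One small imprecision worth tightening: you justify the $\Ocomp(\log|S|)$ boundary cost by citing $|L|H_0(L)$. That quantity governs the $H_0$ theorem, not $H_1$ directly; what you actually need is that in each $P_j$ at most two consecutive pairs involve a short phrase, and each such pair contributes at most $\log|P_j|=\Ocomp(\log|S|)$ to $|P_j|H_1(P_j)$. The bound you state is right, just keep the reason aligned with the $H_1$ setting.
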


\noindent This motivates a generalization of 
minimum entropy bounded-factor parsing problem:
\begin{definition}[Minimum First-Order Entropy Bounded-Factor Parsing Problem]
\label{def:problem_first}
Given an integer $m$ and a string $S$ compute its $m$-bounded parsing $Y_S$
minimizing $|Y_S|H_1(Y_S)$ over all $m$-bounded parsings.
\end{definition}
We show that our heuristic can be generalized to this problem.

\Section{A Better Parsing}
Theorem~\ref{thm:thm1} and~\ref{thm:mean_entropy} utilize the following Lemma to upper-bound the entropy of a parsing:
\begin{lemma}[\cite{aczel1973shannon}]\label{lemma_prob}
	Let $w$ be a string over alphabet $\Gamma$
	and $p:\Gamma \rightarrow \mathbb{R}^+$ be a function such that $\sum_{s \in \Gamma} p(s)  \leq 1$.
	Then:
	$$
	|w|H_0(w) \leq -\sum_{s \in \Gamma} |w|_{s} \log p(s)\enspace .
	$$
\end{lemma}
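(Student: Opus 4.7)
The plan is to reduce the statement to the non-negativity of relative entropy (Gibbs' inequality), extended to sub-probability measures. Writing $q(s) = |w|_s/|w|$ for the empirical distribution of symbols in $w$, the definition of zeroth-order entropy gives $|w|H_0(w) = -\sum_{s} |w|_s \log q(s)$, so the claim is equivalent to
\[
\sum_{s\in\Gamma} |w|_s \log \frac{p(s)}{q(s)} \leq 0,
\]
where the sum can be restricted to those $s$ with $|w|_s>0$ (so $q(s)>0$).

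Next I would invoke the elementary inequality $\ln x \leq x-1$ for $x>0$, i.e.\ $\log x \leq (x-1)/\ln 2$. Applying this with $x = p(s)/q(s)$ for each $s$ with $|w|_s > 0$, I obtain
\[
\sum_{s:\,|w|_s>0} |w|_s \log \frac{p(s)}{q(s)} \;\leq\; \frac{1}{\ln 2} \sum_{s:\,|w|_s>0} |w|_s \left( \frac{p(s)}{q(s)} - 1 \right) \;=\; \frac{|w|}{\ln 2}\left( \sum_{s:\,|w|_s>0} p(s) - 1 \right),
\]
using $|w|_s/q(s) = |w|$ in the last step.

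Finally, I would use the hypothesis $\sum_{s\in\Gamma} p(s) \leq 1$, which implies $\sum_{s:\,|w|_s>0} p(s) \leq 1$ (since $p$ is non-negative), making the right-hand side at most zero. This yields the desired inequality after rearranging back to the form in the statement. The only subtle point, but not truly an obstacle, is handling the sub-probability case $\sum_s p(s) < 1$: one must restrict the sum to the support of $q$ before applying $\ln x \leq x-1$, since otherwise the substitution $|w|_s/q(s)=|w|$ is not valid; this is exactly what makes the proof go through identically whether $p$ is a probability distribution or merely a sub-distribution.
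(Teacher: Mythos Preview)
Your proof is correct; this is precisely the standard Gibbs'-inequality argument. Note that the paper does not actually prove this lemma at all: it is stated with a citation to \cite{aczel1973shannon} and used as a black box, so there is no ``paper's own proof'' to compare against.
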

Lemma~\ref{lemma_prob} should be understood as follows:
we can assign each different phrase $y$ a value $p(y)$,
then each occurrence of $y$ will contribute $-\log p(y)$
to the ``entropy'' of parsing;
the assumption that values $p$ for different phrases sum up to at most $1$
is needed to ensure that $p$ behaves like a probability distributions on phrases.
In fact, for such defined function $p$
we can compute a prefix-free coding that assigns a code of length roughly
$-\log p(y)$ to $y$, e.g.\ both Huffman and arithmetic coding
can be used to obtain the codes of length $-\log p(y) + \Ocomp(1)$.

The actual functions $p$ utilised by
Theorem~\ref{thm:thm1} and~\ref{thm:mean_entropy} are as follows:
For a factor $y=a_1 \cdots a_j$,
define $p(y) = p_{\text{len}}(|y|) \cdot \prod_{i=1}^j p_i(a_i)$,
where $p_{\text{len}}(|y|)$ is the empirical probability that a factor has a length $|y|$, i.e.\ $\cnt{|y|}/|Y|$,
where $\cnt{|y|}$ is number of factors in $Y$ of length $|y|$.
Theorem~\ref{thm:thm1} and~\ref{thm:mean_entropy} use different $p_i$s,
for the former it~is~\footnotemark[1]:
\begin{align*}
p_i(a) &=
\begin{cases}
\frac{1}{\sigma} & \text{for  $i \leq k$, where $\sigma$ is the size of the alphabet},\\
p_i(a_i) = \frac{\cnt{a_{i-k}\cdots a_i}}{\cnt{a_{i-k}\cdots a_{i-1}}}, &\text{for } i > k,
\end{cases}
\intertext{for the latter it is~\footnotemark[1]:}
p_i(a_i) &= \frac{\cnt{a_{1}\cdots a_i}}{\cnt{a_{1}\cdots a_{i-1}}},
\end{align*}
where $\cnt{w}$ is a number of occurrences of world $w$ in the input string $S$.
\footnotetext[1]{ If $a_{i-k}\cdots a_i$ or $a_{1}\cdots a_i$ is the suffix of $S$
	the values should be $p_i(a_i) = \frac{\cnt{a_{i-k}\cdots a_i}}{\cnt{a_{i-k}\cdots a_{i-1}}-1}$
and $p_i(a_i) = \frac{\cnt{a_{1}\cdots a_i}}{\cnt{a_{1}\cdots a_{i-1}}-1}$ respectively,
yet as this does not change $\sum_{y_i \in Y_S} - \log p_l(|y_i|)$ significantly
we use those values as they simplify both the reasoning and algorithm.}

The sum, over all different phrases, of such defined values is at most $1$~\cite{entropyBoundsUnpublished},
hence satisfying conditions of Lemma~\ref{lemma_prob}.
Observe that these functions estimate not only the cost of entropy coding
of a phrase,
but also such cost for each individual letter in phrase:
given a phrase $y= a_1 a_2 \cdots a_j$
in the first case for $i \leq k $ the cost of encoding a~letter $a_i$ is $-\log p_i(a_i)$,
which corresponds to the naive encoding with $\log \sigma$ bits,
and for $i > k$ the cost corresponds to encoding with $k$-th order entropy coder.
Similarly, in the second case, the cost of encoding the letter,
$-\log p_i(a_i)$,
corresponds to the cost of encoding letter $a_i$ with $i-1$-th order entropy coder
(i.e.\ first letter is encoded with $0$-th order entropy coder, second with a $1$-st order coder and so on).
We also note that $\sum_{y_i \in Y_S} - \log p_l(|y_i|)$ sums up to the entropy of phrases' lengths, i.e.\ $|L|H_0(L)$.


In order to use some function $p$ in a heuristic
$p$ should depend only on the string on not on the parsing itself,
then dynamic programming can be used to compute a~parsing $Y_S$ minimizing $\sum_{y_i \in Y_S} - \log p(y_i)$.
Unfortunately, 
$p_{\text{len}}$ depends on the parsing. 
We thus modify $p_{\text{len}}$ and set it to be $1/m$:
we are interested in factors of length at most $m$, so
$p_{\text{len}}$ is a probability distribution on $\{1, \ldots, m\}$.
Such modification should have little effect on the encoding size---we are interested in a parsing with short factors,
hence the factor corresponding to the entropy of lengths
$|L|H_0(L)$ should be small.

We will use a variant of $p_\sigma$ of Theorem~\ref{thm:mean_entropy},
as it carries more information on the string structure that the other one
(and gives significantly better practical results).

\begin{definition}
\label{def:pvalues_zero}
Given a string $S$ and integer $m$ denoting the maximum substring length,
we define $p$ for a substring $y = a_1 a_2 \cdots a_j$ of $S$ as
$$
p_{H_0}(y) = \frac 1 m \cdot p_1(a_1) \cdot p_2(a_2)\cdots p_j(a_j),
\text{ where }
p_i(a_i) = \frac{\cnt{a_1 a_2 \cdots a_i}}{\cnt{a_1 a_2 \cdots a_{i-1}}}
$$
and phrase cost of $y$ as: $-\log p_{H_0}(y)$.
\end{definition}
\begin{fact}
\label{fact:p_telescope_1}
For a phrase $y$, $p_{H_0}(y) = \frac{1}{m}\cdot \frac{\cnt{y}}{|S|}$
\end{fact}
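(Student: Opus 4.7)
The plan is to unfold the definition of $p_{H_0}(y)$ and observe that the product is telescoping. Writing out $p_{H_0}(y) = \frac{1}{m}\cdot p_1(a_1)\cdot p_2(a_2)\cdots p_j(a_j)$ and substituting $p_i(a_i) = \cnt{a_1\cdots a_i}/\cnt{a_1\cdots a_{i-1}}$ gives
$$
p_{H_0}(y) = \frac{1}{m}\cdot \frac{\cnt{a_1}}{\cnt{\epsilon}}\cdot \frac{\cnt{a_1 a_2}}{\cnt{a_1}}\cdots \frac{\cnt{a_1 a_2\cdots a_j}}{\cnt{a_1 a_2\cdots a_{j-1}}},
$$
where $\epsilon$ denotes the empty word, understood as the ``$i=1$'' denominator. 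Every intermediate counter $\cnt{a_1\cdots a_i}$ for $1\leq i\leq j-1$ appears once in a numerator and once in a denominator and cancels, leaving only $\cnt{a_1\cdots a_j}/\cnt{\epsilon} = \cnt{y}/\cnt{\epsilon}$.

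The remaining step is to observe that $\cnt{\epsilon} = |S|$: every position of $S$ is an occurrence of the empty word, so the convention adopted implicitly by the definition of $p_1$ (where the denominator of $\cnt{a_1}/\cnt{\epsilon}$ must be $|S|$ for the $p_i$'s to form a valid distribution as asserted in the preceding paragraph) forces this value. Combining the two identities yields $p_{H_0}(y) = \frac{1}{m}\cdot \frac{\cnt{y}}{|S|}$, which is the claim.

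There is essentially no obstacle here; the fact is a one-line telescoping calculation. The only thing worth being explicit about is the base case $\cnt{\epsilon} = |S|$, which is the reason the formula involves $|S|$ rather than some other global quantity. It may also be worth pointing out that the footnote correction (subtracting $1$ from the denominator when $a_1\cdots a_i$ is a suffix of $S$) does not affect this identity in the form stated, since the fact uses precisely the simplified $p_i$ given in Definition~\ref{def:pvalues_zero}.
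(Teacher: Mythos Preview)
Your proof is correct and matches the intended argument: the paper states this as a fact without proof, but the label \texttt{fact:p\_telescope\_1} makes clear that telescoping the product in Definition~\ref{def:pvalues_zero} is exactly what the authors had in mind. Your handling of the base case $\cnt{\epsilon}=|S|$ and the remark about the footnote correction are both appropriate.
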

\noindent 
We can now give a heuristic for minimum entropy bounded-factor parsing problem:
\begin{lemma}
\label{lem:algo_zero}
For a string $S$ we can compute in $\Ocomp(|S|\cdot m)$ time
an $m$-bounded parsing which minimizes the sum, over all phrases, of values~$p_{H_0}$.
\end{lemma}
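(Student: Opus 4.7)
The plan is to reduce the problem to a simple dynamic program over positions of $S$, with each transition costable in $\Ocomp(1)$ after a suitable preprocessing. By Fact~\ref{fact:p_telescope_1}, for any substring $y$ of $S$ we have
$$-\log p_{H_0}(y) \;=\; \log m + \log|S| - \log \cnt{y},$$
so the only nontrivial ingredient is the occurrence count $\cnt{y}$ of each candidate phrase.

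Let $D[i]$ denote the minimum of $\sum_j -\log p_{H_0}(y_j)$ over $m$-bounded parsings of the prefix $S[1..i]$, with $D[0]=0$. The standard recurrence is
$$D[i] \;=\; \min_{1 \le j \le \min(m,i)} \Bigl( D[i-j] + \log m + \log|S| - \log \cnt{S[i-j+1..i]} \Bigr),$$
and the optimal parsing is recovered by backtracking. There are $|S|$ states and at most $m$ transitions per state, so once every transition cost is available in $\Ocomp(1)$, the DP runs in $\Ocomp(|S|\cdot m)$ time.

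It remains to tabulate $\cnt{S[i..i+j-1]}$ for all $1 \le i \le |S|$ and $1 \le j \le m$. First build the suffix tree of $S$ in $\Ocomp(|S|)$ time and annotate each explicit node with its leaf count, which equals the number of occurrences in $S$ of the string spelled from the root. Then, for each starting position $i$, walk down from the root along $S[i],S[i+1],\ldots$ for up to $m$ characters, and at each prefix length $j$ read off the subtree count at the current tree location (handling the case when we are in the middle of an edge by using the deepest node above). This produces $\cnt{S[i..i+j-1]}$ for all $j\le m$ in $\Ocomp(m)$ time per starting position, and $\Ocomp(|S|\cdot m)$ in total, which is then stored in a two-dimensional table indexed by starting position and length.

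The main potential obstacle is the count preprocessing: we need occurrence counts of $\Theta(|S|\cdot m)$ distinct substrings in time $\Ocomp(|S|\cdot m)$. The suffix tree walk-down above already matches this bound without any clever amortization, so the straightforward approach suffices; one could accelerate it using suffix links in the spirit of Ukkonen's algorithm, but this is not needed since the DP itself has the same running time.
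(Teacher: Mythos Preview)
Your proof is correct and follows essentially the same approach as the paper: the same prefix DP with $O(m)$ transitions per position, the same use of Fact~\ref{fact:p_telescope_1} to reduce each transition cost to an occurrence count, and a suffix-tree-based preprocessing to make those counts available in $O(1)$. The only difference is that you spell out the preprocessing explicitly (walk down from the root along each suffix for $m$ steps, reading leaf counts), whereas the paper simply says ``constructing suffix tree with appropriate structures'' and leaves the details implicit.
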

\begin{proof}
We apply standard dynamic programming, let the cost of a parsing be the sum of $p$
over all its phrases.
Let $dp(i)$ denote the smallest cost of parsing of $S[1\ldots i]$.
Assuming we computed $dp(j)$ for $j < i$ we can compute $dp(i)$ using:
$dp(i) = \min\{dp(i-j) + p(S[i-j\ldots i]) \ | \  1\leq j \leq m \}$.\\
Computing each $dp(i)$ takes $\Ocomp(m)$ time,
assuming we can access $p(\cdot)$ value in $\Ocomp(1)$ time.
To this end we preprocess the input to get the number of occurrences of $y$ in $S$,
e.g.\ by constructing suffix tree with appropriate structures,
and use Fact~\ref{fact:p_telescope_1}.
We retrieve the parsing by backtracking. 
\end{proof}

Observe that the algorithm finds a parsing whose cost with respect to the
function $p(\cdot)$ is not greater than both of the estimates used in proof
(not considering the $|L|H_0(L)$),
thus by Lemma~\ref{lemma_prob}, we obtain the following bounds:
\begin{lemma}
\label{lem:alg_zero_bound}
For a~given string $S$ over alphabet of size $\sigma$
and a parameter $m$
algorithm from Lemma~\ref{lem:algo_zero} finds a parsing $Y_S$ of $S$ such that
both the following inequalities holds:
\begin{align*}
|Y_S|H_0(Y_S) &\leq \frac{|S|}{m} \sum_{0 \leq i \leq m-1} H_i(S) + |Y_S|\log m\\
|Y_S|H_0(Y_S) &\leq |S|H_{k}(S) + |Y_S|\log m + |Y_S| k \log \sigma, & \text{ for each} 0 \leq k < m.
\end{align*}
\end{lemma}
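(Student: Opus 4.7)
The plan is to combine Lemma~\ref{lemma_prob} applied to $p_{H_0}$ with the optimality of the dynamic program from Lemma~\ref{lem:algo_zero}, then run two independent comparisons that mirror the proofs of Theorems~\ref{thm:mean_entropy} and~\ref{thm:thm1}. Because for every fixed length $j$ the Theorem~\ref{thm:mean_entropy}-style product $\prod_{i=1}^{j} p_i(a_i)$ already sums to at most $1$ over all length-$j$ strings (the fact cited just before Definition~\ref{def:pvalues_zero}), scaling by $p_{\text{len}}=1/m$ and summing over $j \leq m$ yields $\sum_y p_{H_0}(y) \leq 1$, so Lemma~\ref{lemma_prob} gives $|Y_S|H_0(Y_S) \leq \sum_{y \in Y_S} -\log p_{H_0}(y)$. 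Fact~\ref{fact:p_telescope_1} then rewrites the $p_{H_0}$-cost of any $m$-bounded parsing $Y'$ as $|Y'|\log m + \sum_{y \in Y'} -\log(\cnt{y}/|S|)$, reducing the task to bounding the second sum for a well-chosen parsing.

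For the first inequality I would invoke the optimality of $Y_S$: by Lemma~\ref{lem:algo_zero} its $p_{H_0}$-cost does not exceed that of any fixed $m$-bounded parsing $Y^\star$. Taking $Y^\star$ to be the length-$m$ parsing witnessing Theorem~\ref{thm:mean_entropy} with $l=m$, I would reuse the averaging-over-shifts argument from that proof: letting $Y^{(s)}$ denote the length-$m$ parsing starting at offset $s$, each input position is the $i$-th letter of a phrase of $Y^{(s)}$ for exactly one $s \in \{0,\ldots,m-1\}$, and $\sum_j -\log p_i(S[j])$ equals $|S|H_{i-1}(S)$ by definition of empirical entropy, so averaging over $s$ gives $\sum_{y \in Y^{(s)}}-\log(\cnt{y}/|S|) \leq \frac{|S|}{m}\sum_{i=0}^{m-1}H_i(S)$ for some shift, up to $\Ocomp(\log|S|)$ boundary error. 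Since $|Y_S| \geq \lceil|S|/m\rceil$ and $|Y^\star| \leq \lceil|S|/m\rceil+1$, the $|Y^\star|\log m$ overhead is absorbed into the stated $|Y_S|\log m$ summand.

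For the second inequality no optimality argument is required. Since $Y_S$ is itself an $m$-bounded parsing, Theorem~\ref{thm:thm1} applied directly to it yields $|Y_S|H_0(Y_S) \leq |S|H_k(S) + |Y_S|k\log\sigma + |L|H_0(L)$; because the length string $L$ lives on the alphabet $\{1,\ldots,m\}$ we have $|L|H_0(L) \leq |L|\log m = |Y_S|\log m$, establishing the second bound for every $0 \leq k < m$.

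The main obstacle is purely bookkeeping: I need to verify that the $\Ocomp(\log|S|)$ error from the averaging argument and the $\Ocomp(\log m)$ gap between $|Y^\star|\log m$ and $|Y_S|\log m$ fit inside the stated $|Y_S|\log m$ slack, which they do whenever $|Y_S|=\Omega(|S|/m)$, as is automatic for an $m$-bounded parsing of $S$. There is no conceptual difficulty, because $p_{H_0}$ was designed precisely so that its per-phrase cost dominates, up to the $\log m$ length overhead, the per-phrase estimate underlying each of the two source theorems.
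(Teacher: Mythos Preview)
Your argument for the first inequality mirrors the paper's: both verify that $\sum_y p_{H_0}(y)\leq 1$ so that Lemma~\ref{lemma_prob} applies, and both compare the algorithm's $p_{H_0}$-cost against that of the shift-averaging parsing underlying Theorem~\ref{thm:mean_entropy}, exploiting the optimality guaranteed by Lemma~\ref{lem:algo_zero}. The bookkeeping you flag (the $\Ocomp(\log m)$ gap between $|Y^\star|\log m$ and $|Y_S|\log m$) is glossed over in the paper as well.

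For the second inequality you take a genuinely different route. You apply Theorem~\ref{thm:thm1} directly to $Y_S$ and observe $|L|H_0(L)\leq |Y_S|\log m$ because lengths lie in $\{1,\ldots,m\}$. The paper instead derives the second inequality \emph{from the first}, using monotonicity of empirical entropies: bounding $H_i(S)\leq \log\sigma$ for $i<k$ and $H_i(S)\leq H_k(S)$ for $i\geq k$ turns $\tfrac{|S|}{m}\sum_{i<m}H_i(S)$ into $\tfrac{|S|}{m}\bigl(k\log\sigma+(m-k)H_k(S)\bigr)\leq |Y_S|k\log\sigma+|S|H_k(S)$. Your approach is cleaner in that it does not use the algorithm's optimality at all---it holds for \emph{every} $m$-bounded parsing---and in fact yields the bound for all $k\geq 0$, not just $k<m$. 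The paper's derivation, on the other hand, makes explicit that the second bound is always a coarsening of the first, which is informative in its own right.
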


In the following we develop a heuristic for 
minimum first-order entropy bounded-factor parsing.
To this end we analyze the main idea of proofs of Theorem~\ref{thm:mean_entropy_2} and~\ref{thm:thm2}:
we want to assign each phrase a probability $p$ (or a prefix-free code)
and apply Lemma~\ref{lemma_prob}.
However, Lemma~\ref{lemma_prob} works only for the zeroth order entropy.
Still, $H_1$ is defined through $H_0$ of appropriate strings:
for a string $T \in \Sigma^*$,
$|T|H_1(T) =  \sum_{T_\sigma, \sigma \in \Sigma} |T_\sigma| H_0(T_\sigma)$,
where $T_\sigma$ is a string made by concatenating all letters of $T$
which occur in one-letter context $\sigma$ (e.g.\ for $T=abacaac$, $T_a = bcac$).
Thus, for a parsing $Y_S$, we construct $T_y$ for each different phrase $y$ of a parsing and apply Lemma~\ref{lemma_prob} to it.

The main reason, why we can obtain better (theoretical and practical)
bounds for $H_1$ of a parsing, is that
for a given factor $y = a_1 a_2 a_3 \cdots a_{|y|}$ from $T_{y'}$
we know that $y'$ precedes $y$, thus we can include $y'$ in a context.
As a result, in the definition of $p(\cdot) = \frac{1}{m}\cdot p_1(a_1) \cdot p_2(a_2) \cdots p_{|y|}(p_{|y|})$
we define $p_i(a_i)$ as empirical probability of $a_i$
in $|y'| + i - 1$-letter context,
i.e.\ $p_\sigma(a_1 a_2 \cdots a_i) = \cnt{y' a_1 a_2 \cdots a_i }/\cnt{y'}$.
This corresponds to the first letter of the factor being encoded with $|y'|$-order entropy coder,
second with $|y'|+1$-order entropy coder and so on.

In the $H_1$ variant our algorithm we use the $p(\cdot)$ values
used to prove the Theorem~\ref{thm:mean_entropy_2}:
\begin{definition}
	\label{def:pvalues_first}
	Given a string $S$, integer $m$ denoting the maximum substring length
	and a substring $y'y$, where  $y' = b_1 b_2 \cdots b_h$, $y = a_1 a_2 \cdots a_j$,
	we define value $p_{H_1}$ for a~substring $y = a_1 a_2 \cdots a_j$ of $S$:\\
	$$
	p_{H_1}(y, y') = \frac 1 m \cdot p_1(a_1) \cdot p_2(a_2)\cdot \cdots \cdot p_j(a_j)
	\text{, where } 
		p_i(a_i) = \frac{ \cnt{y' a_{1}\cdots a_i}}{ \cnt{y' a_{1}\cdots a_{i-1}}},
	$$
	and phrase cost of $y$ preceded by phrase $y'$ as $-\log p_{H_1}(y, y')$
\end{definition}
Such defined values $p_{H_1}(\cdot)$ are still easy to compute:
\begin{fact}
	\label{fact:p_telescope_2}
	For a phrase $y$ and phrase $y'$ preceding $y$,
	$p_{H_1}(y, y') = \frac{1}{m}\cdot \frac{\cnt{y'y}}{\cnt{y'}}$
\end{fact}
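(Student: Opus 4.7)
The plan is to prove Fact~\ref{fact:p_telescope_2} by a direct telescoping-product calculation, mirroring the argument implicit in Fact~\ref{fact:p_telescope_1}. The quantity $p_{H_1}(y,y')$ is defined as a product of $j+1$ factors: the fixed length factor $\frac{1}{m}$, together with $p_i(a_i) = \cnt{y'a_1\cdots a_i}/\cnt{y'a_1\cdots a_{i-1}}$ for $i=1,\dots,j$. I would simply write this product out and observe that the numerator at index $i$ equals the denominator at index $i+1$.

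More concretely, I would start from
\[
p_{H_1}(y,y') \;=\; \frac{1}{m}\prod_{i=1}^{j} \frac{\cnt{y'a_1\cdots a_i}}{\cnt{y'a_1\cdots a_{i-1}}},
\]
where at $i=1$ the denominator is $\cnt{y'}$ (the empty-suffix case $a_1\cdots a_{i-1} = \varepsilon$). All intermediate counts $\cnt{y'a_1\cdots a_i}$ for $1\le i\le j-1$ cancel between consecutive factors, leaving only $\cnt{y'a_1\cdots a_j}$ in the numerator and $\cnt{y'}$ in the denominator, which is exactly $\cnt{y'y}/\cnt{y'}$ since $y=a_1\cdots a_j$.

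There is really no obstacle here; the statement is a bookkeeping identity of the same flavor as Fact~\ref{fact:p_telescope_1}, the only subtlety being the boundary case for suffixes of $S$ treated in footnote~1 (which is explicitly set aside). The main value of recording this fact is algorithmic: once it is in place, each phrase-cost $-\log p_{H_1}(y,y')$ can be evaluated in $\Ocomp(1)$ time from precomputed substring occurrence counts (e.g.\ via a suffix-tree-based index on $S$), which is precisely what is needed to plug $p_{H_1}$ into the dynamic program analogous to the one in Lemma~\ref{lem:algo_zero} for the first-order variant of the heuristic.
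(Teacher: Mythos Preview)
Your proposal is correct; the telescoping-product computation you give is exactly the intended argument, and indeed the paper does not spell out a proof for this fact (it is recorded as a self-evident identity, just like Fact~\ref{fact:p_telescope_1}). Your additional remark about the $\Ocomp(1)$ evaluation of phrase costs is also on point and matches how the paper uses the fact in Lemma~\ref{lem:algo_first}.
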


We can extend our dynamic programming so that it computes the optimal parsings
with respect to the phrase cost defined above;
note that the time complexity increases, as now 
the phrase cost is dependent on the previous phrase cost,
hence we loop not only on the possible phrase lengths,
but also on the lengths of the previous phrase.

\begin{lemma}
	\label{lem:algo_first}
	For a string $S$ we can compute in $\Ocomp(|S|\cdot m^2)$ time
	an $m$-bounded parsing minimizing the sum, over all phrases,~of~values~$p_{H_1}$.
\end{lemma}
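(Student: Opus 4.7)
The plan is to mimic the dynamic programming of Lemma~\ref{lem:algo_zero}, but to enlarge the state space so that the previous phrase is remembered. Concretely, I would define $dp(i,j)$ as the minimum total phrase cost of any $m$-bounded parsing of $S[1\ldots i]$ whose last phrase is $S[i-j+1\ldots i]$, for $1 \leq j \leq m$. The recurrence is
$$
dp(i,j) \;=\; \min_{1 \leq j' \leq m}\Bigl(\, dp(i-j,\, j') \;+\; \bigl(-\log p_{H_1}(S[i-j+1\ldots i],\, S[i-j-j'+1\ldots i-j])\bigr)\Bigr),
$$
with the base case handling the first phrase separately (its cost is $-\log p_{H_0}(\cdot)$, since no preceding phrase exists, matching Definition~\ref{def:pvalues_zero}). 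The optimal value is then $\min_{1\leq j \leq m} dp(|S|,j)$, and the parsing is recovered by standard backtracking.

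For the running time, there are $O(|S|\cdot m)$ states and each is computed as a minimum over $O(m)$ choices of the previous phrase length, giving $O(|S|\cdot m^2)$ arithmetic operations in total, which matches the claim. The key point is that each individual transition must be evaluated in $O(1)$ time, and for this I would appeal to Fact~\ref{fact:p_telescope_2}: the cost $-\log p_{H_1}(y,y')$ reduces to $\log m + \log \cnt{y'} - \log \cnt{y'y}$, so it suffices to retrieve $\cnt{y'}$ and $\cnt{y'y}$ for arbitrary substrings in constant time. This is achieved by the same preprocessing as in Lemma~\ref{lem:algo_zero}, namely building the suffix tree of $S$ augmented with the number of occurrences in each explicit node, together with a data structure (weighted ancestor queries, or the suffix-tree / suffix-array combination with LCP information) that locates the locus of any substring $S[a\ldots b]$ in constant time after $O(|S|)$ preprocessing.

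Correctness of the dynamic programming is immediate from the fact that $p_{H_1}(y,y')$ depends only on the current phrase $y$ and the immediately preceding phrase $y'$, so the optimal substructure property holds once $(i,j)$ records both the current endpoint and the length of the phrase ending there. Finally, the preprocessing cost is $O(|S|)$ and therefore is dominated by the $O(|S|\cdot m^2)$ dynamic programming, which gives the stated bound.

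The main obstacle, and the only real departure from the $H_0$ case, is the dependence of the phrase cost on the previous phrase. This forces the enlargement of the state from $O(|S|)$ to $O(|S|\cdot m)$ and the extra factor of $m$ in the transition, but once the constant-time evaluation of $-\log p_{H_1}$ is in place via Fact~\ref{fact:p_telescope_2}, nothing else in the argument changes.
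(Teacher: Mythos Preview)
Your proposal is correct and matches the paper's proof essentially verbatim: the paper also defines $dp(i,u)$ as the minimum cost of a parsing of $S[1\ldots i]$ whose last factor has length $u$, gives the same recurrence over the $m$ choices of the previous factor length, and recovers the parsing by backtracking. You supply a bit more detail than the paper does (the base case for the first phrase, and the explicit appeal to Fact~\ref{fact:p_telescope_2} and suffix-tree preprocessing for $\Ocomp(1)$ cost evaluation), but the argument is the same.
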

\begin{proof}
	Let $dp(i, u)$ denote the smallest cost of computing a parsing of $S[1\ldots i]$ with the last factor of length $u$.
	Assuming we computed $dp(j, v)$ for $j < i$, we can compute $dp(i, u)$ follows:
	$$
	dp(i, u) = \min\{dp(i-u, v) + p_{H_1}(S[i-u+1\ldots i], \ S[i-v-u+1 \ldots i-u]) \ | \  \ 1\leq v \leq m \}.
	$$
	Smallest cost is equal to $\min \{dp(|S|, u), 1\leq u\leq m \}$.
	We retrieve the parsing by backtracking.
\end{proof}

Again, we have theoretical guarantees on the size of the computed parsing:
\begin{lemma}
	For a given string $S$ over alphabet of size $\sigma$
	and a parameter $m$
	algorithm from Lemma~\ref{lem:algo_first} finds a parsing $Y_S$ of $S$ such that
	both the following inequalities holds:
\begin{align*}
	|Y_S|H_1(Y_S) &\leq \frac{|S|}{m} \sum_{ m \leq i \leq 2m-1} H_i(S) + |Y_S|\log m\\
	|Y_S|H_1(Y_S) &\leq |S|H_{m}(S) + |Y_S|\log m.
\end{align*}
\end{lemma}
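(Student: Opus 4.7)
The plan is to adapt the proof of Lemma~\ref{lem:alg_zero_bound} to the first-order setting.

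First, I would reduce $|Y_S|H_1(Y_S)$ to the algorithm's objective using Lemma~\ref{lemma_prob}.  Decomposing $|Y_S|H_1(Y_S) = \sum_{y'} |T_{y'}|H_0(T_{y'})$, where $T_{y'}$ is the subsequence of phrases of $Y_S$ immediately following~$y'$, I would invoke Lemma~\ref{lemma_prob} on each $T_{y'}$ with the probability function $p_{H_1}(\cdot,y')$.  The normalisation $\sum_y p_{H_1}(y,y') \leq 1$ follows from Fact~\ref{fact:p_telescope_2}: for each length $j \in \{1,\ldots,m\}$ the inner sum $\sum_{|y|=j}\cnt{y'y}/\cnt{y'}$ is at most~$1$ (each occurrence of $y'$ extends in at most one way of a given length), so after multiplying by $1/m$ and summing over $j$ the total is at most~$1$.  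Summing the per-$y'$ inequalities yields
\[
|Y_S|H_1(Y_S) \;\leq\; \sum_{i\geq 2} -\log p_{H_1}(y_i,y_{i-1}) \;\leq\; C(Y_S),
\]
where $C$ is the objective minimised by the dynamic program of Lemma~\ref{lem:algo_first} (the cost of the first phrase is nonnegative and can be absorbed as a boundary term).

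Next, I would use the algorithm's optimality to bound $C(Y_S)$ via cleanly analysable parsings.  Let $Y^{(0)},\ldots,Y^{(m-1)}$ be the $m$ ``offset'' parsings, where $Y^{(r)}$ splits $S$ into phrases of length $m$ starting at position $r+1$ (with possibly shorter boundary phrases at the two ends).  By optimality, $C(Y_S) \leq \min_r C(Y^{(r)}) \leq \tfrac{1}{m}\sum_r C(Y^{(r)})$.  Fact~\ref{fact:p_telescope_2} and telescoping expand $-\log p_{H_1}(y_{i+1},y_i)$ into $\log m$ plus, for each position $j$ of $y_{i+1}$ in $S$, a term $-\log\bigl(\cnt{S[j-\ell..j]}/\cnt{S[j-\ell..j-1]}\bigr)$ whose context length $\ell = m + i - 1$ is determined by $j$'s intra-phrase offset $i \in \{1,\ldots,m\}$.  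As $r$ ranges over $\{0,\ldots,m-1\}$, the offset $i$ and hence $\ell$ take each value in $\{m,\ldots,2m-1\}$ exactly once at every interior position of $S$.  Aggregating over $r$ and $j$ then gives $\sum_r \bigl(C(Y^{(r)}) - |Y^{(r)}|\log m\bigr) = \sum_{\ell=m}^{2m-1} |S|\,H_\ell(S) + O(m^2 \log |S|)$, where the error absorbs the $O(m)$ boundary positions of each offset.  Dividing by $m$ and using $|Y_S| \geq \lceil |S|/m \rceil$ to absorb both the length cost $(|S|/m)\log m$ and the remaining $O(m\log|S|)$ boundary term into the $|Y_S|\log m$ slack establishes the first inequality.

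The second inequality then follows immediately from the monotonicity $H_\ell(S) \leq H_m(S)$ for $\ell \geq m$, which yields $(|S|/m)\sum_{\ell=m}^{2m-1}H_\ell(S) \leq |S|H_m(S)$.  The main obstacle I anticipate is exactly the boundary accounting in the averaging step: the short boundary phrases in the $Y^{(r)}$ and the first $\leq 2m-1$ positions of $S$ break the clean ``context length in $\{m,\ldots,2m-1\}$'' bookkeeping, and the $-1$ corrections in the denominators flagged in the footnote to the definition of $p_i$ must be tracked carefully so that the residual boundary contribution indeed fits within the $|Y_S|\log m$ slack in the regime $m = O(\log_\sigma |S|)$ targeted by the paper.
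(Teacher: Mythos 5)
Your proposal follows essentially the same route as the paper's own (sketched) proof: you decompose $|Y_S|H_1(Y_S)$ into $\sum_{y'}|T_{y'}|H_0(T_{y'})$ and apply Lemma~\ref{lemma_prob} with $p_{H_1}(\cdot,y')$ after the same normalization check, then compare the dynamic program's optimum against the average of the $m$ offset parsings exactly as in the proof of Theorem~\ref{thm:mean_entropy_2}, and finish with the monotonicity $H_i(S)\leq H_m(S)$ for $i\geq m$. If anything, you are more careful than the paper about the boundary phrases and suffix-count corrections (your $\Ocomp(m\log|S|)$ residual), which the paper silently absorbs via its footnote convention on the $p_i$ values and which account for the $\Ocomp(\log|S|)$ slack present in Theorem~\ref{thm:mean_entropy_2} but dropped from this lemma's statement.
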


\Section{Experimental results --- Entropy Comparison}
In the next sections we present the experimental results.
Our implementation make use of sdsl library~\cite{gbmp2014sea}.
The implementation is available at~\url{https://github.com/iguana-ben/compressed-representation}.
Test data is from \textit{Pizza \& Chilli} corpus~\url{http://pizzachili.dcc.uchile.cl/texts.html}.


Table~\ref{tab:zero_entropy_compare} contains the comparison of the parsing returned by
$H_0$ version of our parsing algorithm (denoted by $A$) 
compared to parsing obtained by applying Theorem~\ref{thm:mean_entropy} (denoted by $B$),
i.e.\ by evaluating $m$ naive parsings and returning the one with smallest entropy.
In general, both the entropy of parsing and number of different phrases
(denoted by $|\Sigma|$) are smaller for the algorithm's parsing:
while the $|A|H_0(A)$ is smaller than $|B|H_0(B)$ by a couple percent,
the number of different phrases is reduced significantly, even halved.
Note that $|\Sigma|$ corresponds closely to the encoding size of dictionary
(e.g.\ Huffman) when using
zeroth order entropy coder: on average $\Theta\left( |\Sigma|\log |\Sigma|\right)$ bits are both sufficient and necessary for a~random sequence over alphabet of size $|\Sigma|$;
moreover with arithmetic coding we still need to store
$|\Sigma|$ frequencies, which gives similar bound.

Table~\ref{tab:zero_entropy_compare} also shows an interesting phenomenon:
for small $m$ Theorem~\ref{thm:mean_entropy} seems to be tight.
Also, the \textit{dna} sequence was the only one on which there was no improvement;
note that in this case first $m$ entropies are almost equal,
so the overall text seems random for small $m$,
thus we cannot get much information from $p_{H_0}$
(and we lose on encoding entropy of lengths).

Table~\ref{tab:first_entropy_compare} contains the comparison
of the parsing returned by $H_1$ version
of our parsing algorithm (denoted by $A$)
to the parsing from Theorem~\ref{thm:mean_entropy_2} (denoted by $B$).
Again, we see decrease of entropy, however this time the
number of different phrases, $|\Sigma_A|$, is larger.
This is not so important, though,
as for the first order entropy coder the encoding of the dictionary
closely corresponds to the number of different two-letter words,
as for each letter we build a separate dictionary and store which
letters belong to it,
hence we store $|\text{pairs}(A)|$ numbers,
similarly in the first order arithmetic coding we
store $|\text{pairs}(A)|$ frequencies.
Hence, the number  $|\text{pairs}(A)| \log |\Sigma_{A}|$ seems to better reflect 
the encoding size of the dictionary.
And while $|\Sigma_{A}|$ is larger than $|\Sigma_{B}|$,
$|\text{pairs}(A)|$ is smaller than $|\text{pairs}(B)|$.
Overall, the $|A|H_0(A)$ can be about $5-20\%$ smaller than the $|B|H_0(B)$,
and the $|\text{pairs}(A)|$ can be about $10\%$ smaller than $|\text{pairs}(B)|$.
\begin{table}[h!]
	\begin{center}
		{
			\renewcommand{\baselinestretch}{1}\footnotesize
			\begin{tabular}{|c|c|c|ccc|ccc}
				File & $m$ & $\frac{1}{m} \sum\limits _{i < m} H_i(S) $
				& $\frac{|B|}{|S|}H_0(B)$ & $\frac{|S|}{|B|}$ & $|\Sigma_{B}|$ 
				& $\frac{|A|}{|S|}H_0(A)$ & $\frac{|S|}{|A|}$ & $|\Sigma_{A}|$ \\
				\hline
				\multirow{4}{*}{\textit{english}}
				 & 2 & 4.0677 & 
				 	4.0676 & 2. &  $5.3 \cdot 10^3$ & 
				 	4.0610 & 1.831 & $3.1 \cdot 10^3$ \\
				 & 4 & 3.3607 &
				 	3.3570 & 4. & $2.0 \cdot 10^5$ & 
				 	3.1928 & 3.629 & $.93 \cdot 10^5$ \\
				 & 6 & 2.8698 &
				 	2.8431  & 6. & $1.1 \cdot 10^6$ & 
				 	2.6457  & 5.447 & $.44 \cdot 10^6$ \\
				 & 8 & 2.5166 &
				 	2.4383 & 8. & $2.3 \cdot 10^6$ & 
				 	2.2773 & 7.302 & $.97 \cdot 10^6$ \\
				 \hline
				 \multirow{4}{*}{\textit{dblp.xml}}
				 & 2 & 4.2616 &
				 	4.2615 & 2. & $5.7 \cdot 10^3$ &
				 	4.1718 & 1.849 & $2.5 \cdot 10^3$ \\
				 & 4 & 2.9622 & 
				 	2.9569 & 4. & $2.9 \cdot 10^5$ &
				 	2.8095 & 3.759 & $1.4 \cdot 10^5$ \\
				 & 6 & 2.2562 &
					2.2328 & 6. & $8.3 \cdot 10^5$ &
					2.1294 & 5.775 & $4.9 \cdot 10^5$ \\
				 & 8 & 1.8431 &
				 	1.8025 & 8. & $1.1\cdot 10^6$ &
				 	1.6911 & 7.735 & $.73 \cdot 10^6$\\
				\hline
				\multirow{4}{*}{\textit{sources}}
				& 2 & 4.7877 &
				4.7866 & 2. & $8.1 \cdot 10^3$ &
				4.7230 & 1.826 & $3.0 \cdot 10^3$ \\
				& 4 & 3.7004 & 
				3.6894 & 4. & $5.4 \cdot 10^5$ &
				3.4833 & 3.613 & $2.1 \cdot 10^5$ \\
				& 6 & 2.9813 &
				2.9341 & 6. & $1.7 \cdot 10^6$ &
				2.7636 & 5.504 & $.81 \cdot 10^6$ \\
				& 8 & 2.4884 &
				2.3994 & 8. & $2.5\cdot 10^6$ &
				2.2843 & 7.450 & $1.4 \cdot 10^6$\\
				\hline
				\multirow{4}{*}{\textit{dna}}
				& 2 & 1.9584 &
				1.9584 & 2. & $1.0 \cdot 10^2$ &
				2.0317 & 1.941 & $.87 \cdot 10^2$ \\
				& 4 & 1.9402 & 
				1.9402 & 4. & $7.4 \cdot 10^2$ &
				2.0204 & 3.837 & $6.6 \cdot 10^2$ \\
				& 6 & 1.9295 &
				1.9294 & 6. & $4.7 \cdot 10^3$ &
				2.0087 & 5.699 & $5.5 \cdot 10^3$ \\
				& 8 & 1.9158 &
				1.9149 & 8. & $6.6\cdot 10^4$ &
				1.9899 & 7.501 & $7.1 \cdot 10^4$\\
				\hline
				\multirow{4}{*}{\textit{pitches}}
				& 2 & 5.1835 &
				5.1833 & 2. & $1.0 \cdot 10^4$ &
				5.2284 & 1.786 & $.43 \cdot 10^4$ \\
				& 4 & 4.4908 & 
				4.4438 & 4. & $2.0 \cdot 10^6$&
				4.4235 & 3.504 & $.95 \cdot 10^6$ \\
				& 6 & 3.5928 &
				3.4137 & 6. & $4.6 \cdot 10^6$ &
				3.3805 & 5.680 & $3.4 \cdot 10^6$ \\
				& 8 & 2.8309 &
				2.6216 & 8. & $4.4\cdot 10^6$ &
				2.5949 & 7.823 & $3.7 \cdot 10^6$ \\
				\hline
				\multirow{4}{*}{\textit{proteins}}
				& 2 & 4.1840 &
				4.1840 & 2. & $5.7 \cdot 10^2$ &
				4.2448 & 1.967 & $5.0 \cdot 10^2$ \\
				& 4 & 4.1369 & 
				4.1340 & 4. & $1.9 \cdot 10^5$ &
				4.2075 & 3.793 & $1.7 \cdot 10^5$ \\
				& 6 & 3.8301 &
				3.6291 & 6. & $5.8 \cdot 10^6$ &
				3.6772 & 5.635 & $4.2 \cdot 10^6$ \\
				& 8 & 3.0066 &
				2.7316 & 8. & $5.4 \cdot 10^6$ &
				2.7257 & 7.948 & $5.1 \cdot 10^6$\\
		\end{tabular}
		}
	\caption{\label{tab:zero_entropy_compare}
		Entropy comparison for $H_0$ variant, entropy values are divided by $|S|$ to get the bps,
		all files of size 50MB, $A$ denotes parsing generated by our algorithm,
		while $B$ is obtained by application of Theorem~\ref{thm:mean_entropy}
		(i.e.\ parsing minimizing $|B|H_0(B)$ among $m$ naive parsings).  }
	\end{center}
\end{table}
\hspace{-1cm}
\begin{table}[h!]
	\begin{center}
		{	
			\renewcommand{\baselinestretch}{1}\footnotesize
			\setlength{\tabcolsep}{4pt}
			\begin{tabular}{|c|c|c|cccc|cccc}
				File & $m$ & $\sum\limits _{m\leq i < 2m} \frac{ H_i(S) }{m} $
				& $\frac{|B|}{|S|}H_1(B)$ & $\frac{|S|}{|B|}$ & $|\Sigma_{B}|$ & $|\text{pairs}(B)|$
				& $\frac{|A|}{|S|}H_1(A)$ & $\frac{|S|}{|A|}$ & $|\Sigma_{A}|$ & $|\text{pairs}(A)|$\\
				\hline
				\multirow{3}{*}{\textit{english}}
				& 2 & 2.6537 &
					2.6510 & 2. & $5.34 \cdot 10^3$ &  $2.54 \cdot 10^5$ &
					2.6286 & 1.89 & $5.64 \cdot 10^3$ & $ 2.03 \cdot 10^5$ \\
				& 3 & 2.0540 &
					2.0285 & 3. & $4.56 \cdot 10^4$ &  $1.53 \cdot 10^6$ &
					1.9819 & 2.85 & $5.02 \cdot 10^4$ & $ 1.29 \cdot 10^6$ \\
				& 4 & 1.6726 &
					1.5882 & 4. & $2.01 \cdot 10^5$ &  $3.64 \cdot 10^6$ &
					1.5161 & 3.83 & $2.27 \cdot 10^5$ & $ 3.23 \cdot 10^6$ \\
				\hline
				\multirow{3}{*}{\textit{dblp.xml}}
				& 2 & 1.6628 &
				1.6589 & 2. & $5.71 \cdot 10^3$ &  $3.72 \cdot 10^5$ &
				1.5205 & 1.92 & $5.93 \cdot 10^3$ & $ 3.05 \cdot 10^5$ \\
				& 3 & 1.0030 &
				0.9802 & 3. & $7.37 \cdot 10^4$ &  $1.26 \cdot 10^6$ &
				0.8509 & 2.90 & $7.81 \cdot 10^4$ & $ 1.07 \cdot 10^6$ \\
				& 4 & 0.7240 &
				0.6818 & 4. & $2.86 \cdot 10^5$ &  $1.82 \cdot 10^6$ &
				0.6102 & 3.89 & $3.09 \cdot 10^5$ & $1.65 \cdot 10^6$ \\
				\hline
				\multirow{3}{*}{\textit{sources}}
				& 2 & 2.6130 &
				2.6038 & 2. & $8.16 \cdot 10^3$ &  $7.00 \cdot 10^5$ &
				2.5597 & 1.90 & $8.40 \cdot 10^3$ & $5.61 \cdot 10^5$ \\
				& 3 & 1.7667 &
				1.7223 & 3. & $1.30 \cdot 10^5$ &  $2.48 \cdot 10^6$ &
				1.6322 & 2.87 & $1.41 \cdot 10^5$ & $ 2.25 \cdot 10^6$ \\
				& 4 & 1.2764 &
				1.1848 & 4. & $5.43 \cdot 10^5$ &  $4.03 \cdot 10^6$ &
				1.0658 & 3.85 & $6.19 \cdot 10^5$ & $3.86 \cdot 10^6$ \\
				\hline
				\multirow{3}{*}{\textit{dna}}
				& 2 & 1.9220 &
				 1.9220 & 2. & $1.01 \cdot 10^2$ &  $1.02 \cdot 10^3$ &
				1.9795 & 1.97 & $1.12 \cdot 10^2$ & $1.09 \cdot 10^3$ \\
				& 3 & 1.9119 &
				 1.9119 & 3. & $3.34 \cdot 10^2$ &  $5.29 \cdot 10^3$ &
				1.9756 & 2.94 & $3.72 \cdot 10^2$ & $8.09 \cdot 10^3$ \\
				& 4 & 1.8914 &
				1.8906 & 4. & $7.25 \cdot 10^2$ &  $.68 \cdot 10^5$ &
				1.9633 & 3.86 & $8.92 \cdot 10^5$ & $1.14 \cdot 10^5$ \\
				\hline
				\multirow{3}{*}{\textit{pitches}}
				& 2 & 3.7981 &
				3.7589 & 2. & $1.02 \cdot 10^4$ &  $2.81 \cdot 10^6$ &
				3.7413 & 1.83 & $1.06 \cdot 10^4$ & $2.21 \cdot 10^6$ \\
				& 3 & 2.3500 &
				2.1569 & 3. & $2.70 \cdot 10^5$ &  $7.99 \cdot 10^6$ &
				2.0447 & 2.91 & $2.86 \cdot 10^5$ & $7.26 \cdot 10^6$ \\
				& 4 & 1.1711 &
				 0.9522 & 4. & $1.98\cdot 10^6$  &  $7.98\cdot 10^6$ &
				 0.8233 & 3.93 & $2.15\cdot 10^6$ & $7.35\cdot 10^6$ \\
				\hline
				\multirow{3}{*}{\textit{proteins}}
				& 2 & 4.0899 &
				 4.0874 & 2. & $5.62\cdot 10^2$ & $1.97\cdot 10^5$ &
				 4.2007 & 1.88 & $5.90\cdot 10^2$  & $1.97\cdot 10^5$ \\
				& 3 & 3.4890 &
				 3.2761 & 3. & $1.07\cdot 10^4$ & $9.84\cdot 10^6$ &
				 3.2574 & 2.87 & $1.12\cdot 10^4$ &  $8.18\cdot 10^6$\\
				& 4 & 1.8763 &
				 1.5245 & 4. & $1.88\cdot 10^5$ & $1.03\cdot 10^7$ &
				 1.4758 & 3.99 & $1.98\cdot 10^5$ &  $9.81\cdot 10^6$\\
				\hline
			\end{tabular}
		}
		\caption{\label{tab:first_entropy_compare}
			Entropy comparison for $H_1$ variant, entropy values are divided by $|S|$ to get bps,
			all files of size 50MB, $A$ denotes parsing generated by our algorithm,
			while $B$ is obtained by application of Theorem~\ref{thm:mean_entropy_2}
			(i.e.\ parsing minimizing $|B|H_1(B)$ among $m$ naive parsings). }
	\end{center}
\end{table}

\Section{Application --- Compression}
While the results in Table~\ref{tab:zero_entropy_compare}
and Table~\ref{tab:first_entropy_compare} show that indeed the algorithm performs better than
the naive partition,
it is hard to measure how well it can actually compress the data.
Especially because the entropy of parsing gives accurate estimation,
we still have to encode additional data and structures.

For $H_0$ to retrieve the input string $S$, we store:
Huffman compressed parsing $Y_S$, where distinct phrases are
replaced with new letters; Huffman dictionary; set of distinct phrases.
We show that the last two can be encoded efficiently;
moreover, such encoding can be easily extended to support queries,
so they will be helpful in compressed structure described later.

We start with encoding the set of phrases:
let $y = a_1 a_2 \cdots a_i \in \Sigma^{\leq m}$ be a phrase.
We treat $y$ as a~number over $|\Sigma|+1$ base,
where each letter $\sigma \in \Sigma$ is assigned a~number from $\{1,|\Sigma|+1\}$
(we do not use $0$ to avoid problematic trailing $0$'s).
Then we sort obtained numbers, call the list of such sorted numbers $P$.
We encode $P$ as a list:
$P' = \{P[1], P[2]-P[1], P[3]-P[2], \ldots, P[|\Sigma_A|] - P[|\Sigma_A|-1]\}$,
where each element of $P'$ is encoded with Elias Delta code.
Note, that on average the encoding should give similar result to
succinct encoding of Trie made of different phrases.
We also assume that the new letters in $Y_S$ correspond to order of phrases in $P$.

We now move to the Huffman dictionary.
Observe that, the Huffman dictionary can be encoded with
$2|\Sigma_A|  + |\Sigma_A| \log |\Sigma_A|$ bits,
where $2|\Sigma_A|$ bits comes from succinct encoding of a Huffman tree and 
$|\Sigma_A| \log |\Sigma_A|$ is for encoding labels in leafs of the tree.
This is also the required number of bits, on average.
Let $c_i$ be the code for a~letter $\sigma_i \in \Sigma_A$.
We append $1$ into $c_i$ and treat it as a number over binary alphabet.
Consider the list $C$ made of sorting the numbers.
We encode it as:
$C' = \{C[1], C[2]-C[1], C[3]-C[2], \ldots, C[|\Sigma_A|] - C[|\Sigma_A|-1]\}$,
with Elias Delta codes.

We also have to encode the order $L$ of letters
corresponding to codes, i.e.\ for a given code $C[i]$ we must know
which code letter $\sigma_i \in \Sigma_A$ corresponds to $C[i]$:
although the list is not sorted we still encode it
as $L[i] - L[i-1]$ with Elias Delta coding (plus additional bit per sign).
This is beneficial, as the symbols in dictionary with the same frequency
may be ordered arbitrarily, so $L$ contains monotonic sublists.

In the case of $H_1$ we build the dictionary for each $\sigma \in \Sigma_A$ separately.
The $C'$ lists are encoded separately, while the lists of letters are concatenated
and encoded together, as in the case for $H_0$.

The presented methods can be successfully applied to data compression and achieve compression ratios competitive to other compression
methods (though the results are still far behind ppmdi).
Note that at some point, when increasing $m$ the size of the dictionary grows
significantly (this is true for both $H_0$ and $H_1$ variant),
which causes the bitsize to grow with $m$,
however while increasing $|S|$ dictionary size should stay the same
for a fixed distribution of letters.

\begin{table}[h!]
	\begin{center}
		{	\setlength{\tabcolsep}{2pt}
			\renewcommand{\baselinestretch}{1}\footnotesize
			\begin{tabular}{c|c|c|c|c|ccc|ccc||c|ccc|ccc}
				File & gzip & bzip & ppmdi & \multicolumn{7}{c}{H0} & \multicolumn{7}{c}{H1}\\ 
				\hline
				&&&&&\multicolumn{3}{c}{naive} &
				\multicolumn{2}{c}{algorithm} & & &
				\multicolumn{3}{c}{naive} &
				\multicolumn{3}{c}{algorithm}\\
				\hline
				&&&&m&total&string&dict&total&string&dict&
					m&total&string&dict&total&string&dict\\
				\cline{5-18}
				\multirow{3}{*}{\textit{english}} &
				\multirow{3}{*}{3.002} &
				\multirow{3}{*}{2.272} &
				\multirow{3}{*}{1.948} &
				4 & 3.430 & 3.36 & 0.07 &
				3.237 &	3.201 &	0.04 & 
				2 & 2.749 &	2.67 &	0.08 &
				2.718 &	2.65 &	0.07 \\
				&&&&
				7 & 3.360 & 2.64 & 0.72 &
				2.817 & 2.46 & 0.36 &
				3 & 2.523 &	2.04 &	0.48 &
				2.449 &	2.00 &	0.45 \\
				&&&&
				8 & 3.557 & 2.44 & 1.11 &
				2.857 & 2.28 & 0.58 &
				4 & 2.823 &	1.60 &	1.23 &
				2.743 &	1.53 &	1.22 \\
				\hline
				\multirow{3}{*}{\textit{dblp.xml}} &
				\multirow{3}{*}{1.379} &
				\multirow{3}{*}{0.898} &
				\multirow{3}{*}{0.737} &
				4 & 3.051 & 2.97 & 0.08 &
				2.864 &	2.82 &	0.05 & 
				2 & 1.866 &	1.76 & 0.11 &
				1.743 & 1.65 &	0.10 \\
				&&&&
				7 & 2.480 & 2.00 & 0.48 &
				2.182 & 1.86 & 0.32 &
				3 & 1.479 &	1.05 & 0.42 &
				1.326 & 0.94 &	0.38 \\
				&&&&
				8 & 2.434 & 1.81 & 0.63 &
				2.134 & 1.69 & 0.44 &
				4 & 1.444 &	0.72 & 0.72 &
				1.355 & 0.66 & 0.69 \\
				\hline
				\multirow{3}{*}{\textit{sources}} &
				\multirow{3}{*}{1.863} &
				\multirow{3}{*}{1.583} &
				\multirow{3}{*}{1.337} &
				4 & 3.867 & 3.70 & 0.17 &
				3.569 &	3.49 & 0.08 & 
				2 & 2.842 &	2.63 & 0.21 &
				2.775 & 2.59 &	0.19 \\
				&&&&
				7 & 3.751 & 2.65 & 1.10 &
				3.172 & 2.51 & 0.67 &
				3 & 2.632 &	1.74 & 0.89 &
				2.523 & 1.66 & 0.86 \\
				&&&&
				8 & 3.898 & 2.40 & 1.49 &
				3.257 & 2.29 & 0.97 &
				4 & 2.852 &	1.20 & 1.65 &
				2.782 & 1.09 & 1.70 \\
				\hline
				\multirow{3}{*}{\textit{dna}} &
				\multirow{3}{*}{2.164} &
				\multirow{3}{*}{2.078} &
				\multirow{3}{*}{1.945} &
				4 & 1.948 & 1.95 & 0.00 &
				2.029 &	2.03 & 0.00 & 
				2 & 1.948 & 1.95 & 0.00 &
				1.997 & 2.00 & 0.00 \\ 
				&&&&
				7 & 1.935 &	1.93 & 0.01 &
				2.013 & 2.00 & 0.01 &
				3 & 1.924 & 1.92 & 0.00 &
				1.989 & 1.99 & 0.00\\
				&&&&
				8 & 1.945 & 1.92 & 0.03 &
				2.022 & 1.99 & 0.03 &
				4 & 1.920 & 1.90 & 0.02 &
				2.011 & 1.97 & 0.04 \\
				\hline
				\multirow{3}{*}{\textit{pitches}} &
				\multirow{3}{*}{2.686} &
				\multirow{3}{*}{2.890} &
				\multirow{3}{*}{2.439} &
				4 & 4.926 & 4.45 & 0.47 &
				4.707 &	4.43 & 0.28 & 
				2 & 4.415 & 3.77 & 0.64 &
				4.318 & 3.76 & 0.56 \\ 
				&&&&
				7 & 5.522 &	2.99 & 2.53 &
				5.175 & 2.97 & 2.20 &
				3 & 4.986 & 2.17 & 2.82 &
				4.797 & 2.06 & 2.74\\
				&&&&
				8 & 5.570 & 2.63 & 2.94 &
				5.221 & 2.60 & 2.62 &
				4 & 5.102 & 0.97 & 4.14 &
				4.851 & 0.84 & 4.01 \\
				\hline
				\multirow{3}{*}{\textit{proteins}} &
				\multirow{3}{*}{3.791} &
				\multirow{3}{*}{3.645} &
				\multirow{3}{*}{3.364} &
				4 & 4.197 & 4.14 & 0.06 &
				4.268 &	4.22 & 0.05 & 
				2 & 4.165 & 4.10 & 0.06 &
				4.281 & 4.22 & 0.07 \\ 
				&&&&
				7 & 5.171 &	3.14 & 2.03 &
				5.028 & 3.14 & 1.88 &
				3 & 5.100 & 3.29 & 1.81 &
				4.966 & 3.27 & 1.70\\
				&&&&
				8 & 5.170 & 2.74 & 2.43 &
				5.082 & 2.73 & 2.35 &
				4 & 5.344 & 1.54 & 3.81 &
				5.198 & 1.49 & 3.71 \\
				\hline

			\end{tabular}
		}
		\caption{\label{tab:compression_compare} Compression results, values in bps, every file is 50MB, \textit{string} and \textit{dict} denote the size of encoding of string
		and dictionary respectively, \textit{algorithm} --- parsing generated by our algorithm ($H_0/H_1$),
		\textit{naive} --- parsing minimizing $|B|H_0(B)/|B|H_1(B)$ among $m$ naive parsings. }
	\end{center}
\end{table}

\Section{Application --- Structure}
We now show how to construct a structure which allows random access.
The high-level idea of previous solutions
(e.g.~\cite{FerraginaV07SimpStat,GonzalezNStatistical}) was
to encode the parsing with entropy coding (e.g.\ Huffman),
store set of phrases in array indexed by codes (i.e.\ $A[c_i] = w, w \in\Sigma_S^m$),
and store additional structure which is able to retrieve $i$-th
encoded code in entropy coded bitstring
(this can be done with $\Ocomp(|Y_S|\log \log |S|$ bits).
Then the letter $S[i]$ can be easily retrieved in constant time,
assuming that we can read the code
in constant time: as phrases are equal, letter $S[i]$ will be in $i/m$ block.

The case for $H_1$ is similar~\cite{GrossiDynamicIndexes},
but we store $|\Sigma_{Y_S}|$ dictionaries, and every $l$-th phrase is stored
explicitly (note that to decode the phrase we have to have the previous one).
Hence the decoding starts at explicitly stored phrases,
and decodes at most $l-1$ phrases to the right,
thus decoding takes $\Ocomp(l)$ time.

It turns out that both the above structures can be modified to support
parsing returned by our algorithms:
the only difficulty is that we do not have equal-length phrases, hence when queried for $S[i]$ we do not know which code to return.
This can be solved by using succinct partial sum structure on length of phrases:
queried for $S[i]$, we know that $i$ is in phrase $j$ such that
$\sum_{k=1}^{j-1} |y_k| < i \leq \sum_{k=1}^j |y_k|$.
Such structure uses $|Y_S|\log m + o(|Y_S|)$ bits, which is $\Ocomp(|S|\log \log / \log_\sigma |S|)$ for $|Y_S| = \Ocomp(|S|/\log_\sigma|S|)$ and $m = \log_\sigma |S|$
(in this case we get the same redundancy
as for structures from~\cite{FerraginaV07SimpStat,GonzalezNStatistical,GrossiDynamicIndexes}).

We also use more practical encoding of dictionary:
we use succinct partial sums for sequences $P'$ and $C'$,
this allows to answer queries in $\Ocomp(1)$ time.
To store the sequence of letters which corresponds to codes, $L$,
we use Elias Delta compressed array.
Such encoding clearly gives much better result than storing all of the possible phrases explicitly.

We implemented the structures for $H_0$,
the results of experiments are in Table~\ref{tab:structure_h0_compare}.
The increased $T_r$ and $\delta_s$ for structure for our parsings
is due to the need for additional structure,
our implementation supports the tradeoff between those two.
\begin{table}[h!]
	\begin{center}
		{
			\renewcommand{\baselinestretch}{1}\footnotesize
			\begin{tabular}{cc|ccc|cccc|cccc}
				file & m &
				\multicolumn{3}{c}{uncompressed} &
				\multicolumn{4}{c}{naive} &
				\multicolumn{4}{c}{algorithm} \\
				\hline
				 &&
				 bps & $T_r$ & $T_b$ & bps & $\delta_s$ & $T_r$ & $T_b$ & bps & $\delta_s$ & $T_r$ & $T_b$ \\
				\hline
				  \multirow{2}{*}{\textit{english}}  & 7 &
				  \multirow{2}{*}{8} & \multirow{2}{*}{0.008} & \multirow{2}{*}{0.003} & 
				  4.205 & 0.845 & 3.14 & 16.06 &
				  3.911 & 1.094 & 14.38 &  18.88 \\
				  & 8  &  &  &  &
				  4.332 & 0.775 & 3.16 & 14.36 &
		 		  3.870 & 1.013 & 14.61 &  16.82 \\
				 \hline
				 \multirow{2}{*}{\textit{dblp}} & 7 &
				 \multirow{2}{*}{8} & \multirow{2}{*}{0.008} & \multirow{2}{*}{0.003} &
				 3.215 & 0.735 & 3.20 & 16.49 & 3.174 & 0.992 & 12.76 & 16.31 \\
				 & 8  & & & & 3.101 & 0.668 & 3.20  & 14.55 &
				 3.006 & 0.872 & 13.37 & 15.12 \\
				 \hline
				 \multirow{2}{*}{\textit{sources}} & 7  &
				 \multirow{2}{*}{8} & \multirow{2}{*}{0.008} & \multirow{2}{*}{0.003} &
				 4.692 & 0.940 & 3.18 & 15.94 &
				 4.308 & 1.136 & 14.26 & 18.11 \\
				  & 8  & & & &
				 4.743 & 0.845 & 3.24 & 14.70 &
				 4.293 & 1.036 & 16.63 & 16.47 \\
				 \hline
				 \multirow{2}{*}{\textit{dna}} & 7  &
				 \multirow{2}{*}{8} & \multirow{2}{*}{0.009} & \multirow{2}{*}{0.003} &
				 2.573 & 0.637 & 2.90 & 14.92 &
				  2.926 & 0.913 & 12.36 & 15.91 \\
				 & 8  &  & & & 2.582 & 0.637 & 3.02  & 12.28 &
				 2.874 & 0.851 & 12.22 & 13.84 \\
				\hline
			\end{tabular}
		}
		\caption{\label{tab:structure_h0_compare} Structure for $H_0$,
		comparison of bps/time[sec] for operations,
		$\delta_s$ --- difference between bps of compressed file (using our encoding) and bps of queryable structure,
		$T_r$ --- read time for a random list of $10^6$ letters,
		$T_b$ --- read time for a read of $10^3$ blocks of 50KB. Ran on Intel i5-7400.
		All files are of size 50MB.}
	\end{center}
\end{table}
\vspace{-5pt}

\Section{References}
\bibliographystyle{IEEEbib}
\bibliography{references}
\appendix

\Section{Appendix}
\subsection{Proofs}
\begin{proof}[Proof of Lemma~\ref{lem:alg_zero_bound}]
	Consider the function $p'(y) = p_1(a_1) \cdot p_1(a_2)\cdots p_1(a_j)$
	i.e.\ the function which does not take into the account phrases' lengths.
	Proof of Theorem~\ref{thm:mean_entropy_2}~\cite{entropyBoundsUnpublished} states that there exist a parsing $Y'_S$
	satisfying: 
	$$
	\sum_{y \in Y'_S} \log p'(y) \leq \frac{|S|}{m} \sum_{0 \leq i \leq m-1} H_i(S) \enspace .
	$$
	The algorithm finds a parsing $Y_S$ which minimizes
	$$\sum_{y \in Y_S} \log p(y) =
	\sum_{y \in Y_S} \log p'(y) + |Y_S|\log m \enspace .$$
	Hence, $$\sum_{y \in Y_S} \log p(y) - |Y_S|\log m \leq  \frac{|S|}{m} \sum_{0 \leq i \leq m-1} H_i(S) \enspace .$$
	
	Now we argue that for all substrings of length at most $m$ the sum of values $p(\cdot)$
	summed over different substrings is at most $1$.
	For a fixed string $S$ over alphabet $\Sigma$ and fixed length $z$
	it holds: $\sum_{w \in \Sigma^z} p'(w) \leq 1$, i.e.\ $p'(\cdot)$ for all possible strings of length $z$
	sums up to at most $1$~\cite[proof of Theorem 7]{entropyBoundsUnpublished}).
	As there are at most $m$ different lengths, the claim holds.
	Thus by applying Lemma~\ref{lemma_prob} the first inequality of the Lemma holds.
	The second inequality follows from the first one and the fact that
	$\log \sigma \geq H_0(S) \geq H_1(S) \geq \cdots \geq H_i(S)$ for every $i$.	
\end{proof}

\begin{proof}[sketch of proof of Theorem~\ref{thm:thm2}]
	We define $p(\cdot)$  values for each phrase $y= a_1 a_2 \cdots a_j$
	preceded by phrase $y'= b_1 b_2 \cdots b_h$ (so $y'y$ is a substring of $S$):
	$$
	p(y, y') = p_l(|y|) \cdot p_1(a_1) \cdot p_2(a_2)\cdot \cdots \cdot p_j(a_j)$$
	where $p_l(|y|) = \frac{\cnt{|y|}}{|Y_S|}$
	and
	\begin{equation*}
	p_i(a_i) = 
	\begin{cases}
	\frac{\cnt{ b_{ h-(i-k)+1 } \cdots b_{h-1} b_h a_{1}\cdots a_i}}{\cnt{ b_{ h-(i-k)+1 } \cdots  b_{h-1} b_h a_{1}\cdots a_{i-1} }}, & \text{for} \  i \leq k;\\
	\frac{\cnt{a_{i-k}\cdots a_i}}{\cnt{a_{i-k}\cdots a_{i-1}}}, & \text{for} \  i > k.\\
	\end{cases}
	\end{equation*}
	In short, every letter is assigned the empirical probability
	of occurring in the $k$-th letter context preceding this letter.
	
	By definition, for a string $T \in \Sigma^*$,
	$|T|H_1(T) =  \sum_{T_\sigma, \sigma \in \Sigma} |T_\sigma| H_0(T_\sigma)$,
	where $T_\sigma$ is a string obtained by concatenating all the letters of $T$
	which occur in one-letter context $\sigma$ (e.g.\ for $T=abacaac$, $T_a = bcac$).
	
	Now the proof is straightforward:
	for a parsing $Y_S$ we derive its alphabet $\Sigma_{Y_S}$,
	constructs strings $T_{\sigma'}$ for $\sigma' \in \Sigma_{Y_S}$,
	and apply Lemma~\ref{lemma_prob} for each $T_{\sigma'}$.
	Observe that as each letter is assigned its empirical probability of occurring in $k$-th letter context preceding this letter,
	by definition of $H_k(S)$ all the values $-\log p(y)$ summed over all factors
	sum up to the claimed bounds, i.e.:
	$$
	\sum_{\sigma' \in \Sigma_{Y_S}} \sum_{y \in T_\sigma'} \log p(y,\sigma') 
	 = |S|H_k(S) + |L|H_0(L) \enspace ,
	$$
	where $|L|H_0(L)$ is the entropy of lengths.
	To apply Lemma~\ref{lemma_prob} we need the fact that
	$\sum_{\sigma'' \in \Sigma_{T_\sigma}} p(\sigma'') \leq 1$,
	where $\Sigma_{T_\sigma}$ is the alphabet of $T_\sigma$
	(we need this for each $T_\sigma$),
	yet it can be shown analogously as the claim used in
	\cite[proof of Theorem~\ref{thm:thm1}]{entropyBoundsUnpublished}.
\end{proof}

\begin{proof}[sketch of proof of Theorem~\ref{thm:mean_entropy_2}]
	We define $p(\cdot)$  values for each phrase $y= a_1 a_2 \cdots a_j$
	preceded by phrase $y'$:
	$$
	p(y, y') = p_l(|y|) \cdot p_1(a_1) \cdot p_2(a_2)\cdot \cdots \cdot p_j(a_j)$$
	where
	$p_i(a_i) = \frac{ \cnt{y' a_{1}\cdots a_i}}{ \cnt{y' a_{1}\cdots a_{i-1}}}$
	and $p_l(|y|) = \frac{\cnt{|y|}}{|Y_S|}$.
	
	This should be understood as follows:
	if the previous factor has length $|y'|$,
	then we look at the empirical probability that 
	$i$-th letter of factor $y$ occurs in $S$ in an $|y'| + i-1$ letter context.
	Or equivalently, we assign the letter $a_i$ the cost, i.e.\ $-\log p(a_i)$,
	which roughly corresponds to cost of encoding of $a_i$ with $|y'| + i-1$-order
	entropy coder.
	
	As in the proof of Theorem~\ref{thm:thm2}, we have:
	for a string $T \in \Sigma^*$,
	$|T|H_1(T) =  \sum_{T_\sigma, \sigma \in \Sigma} |T_\sigma| H_0(T_\sigma)$,
	where $T_\sigma$ is a string made by concatenating all the letters of~$T$.
	
	Again we must show that $\sum_{\sigma' \in \Sigma_{T_\sigma}} p(\sigma') \leq 1$,
	so we can apply Lemma~\ref{lemma_prob}.
	This can be shown in exactly the same way as in proof of Theorem~\ref{thm:mean_entropy_2},
	see~\cite{entropyBoundsUnpublished}.
	
	It was left to show the parsing $Y_S$ for which
	$\sum_{y \in Y_S} -\log p(y) \leq \frac{|S|}{l} \sum_{i=l}^{2 l-1} H_i(S) + \Ocomp(\log |S|)$.
	We use similar arguments as in proof of Theorem~\ref{thm:mean_entropy_2}:
	we look at $l$ possible parsings where each phrase is of length $l$,
	except for the first and the last one, which can be shorter.
	Summing up all the values $\sum_{y \in Y_S} -\log p(y, y')$ for the $l$ parsings 
	we end up with
	$$
	|S|\sum_{i=l}^{2 l-1} H_i(S) + \sum_{i<l}\sum_{y \in Y_S^i} p_l(|y|) .
	$$ 
	Note that the second term is the sum of entropies of lengths for each parsing,
	i.e.\ the~entropy of strings made of lengths of phrases.
	Note that the $ \sum_{i<l}\sum_{y \in Y_S^i} p_l(|y|) = \Ocomp(l \log |S|)$,
	as $\sum_{y \in Y_S^i} p_l(|y|) = \Ocomp(\log |S|)$ due to the fact that
	only the first and last phrases can have different lengths.
	Hence we can conclude that for at least one of those parsings it holds:
	$$
	\sum_{y \in Y_S} -\log p(y) \leq \frac{|S|}{l} \sum_{i=l}^{2 l-1} H_i(S) + \Ocomp(\log |S|),
	$$
	which yields the claim.
\end{proof}

\begin{proof}[sketch of proof of Lemma~\ref{lem:algo_first}]
Analogously as in the proof of Lemma~\ref{lem:algo_zero},
there exist a~parsing (by the proof of Theorem~\ref{thm:mean_entropy_2}) where 
\begin{equation}
\label{eq:lem_alg_first}
\sum_{y \in Y'_S} \log p'(y', y) \leq \frac{|S|}{m} \sum_{0 \leq i \leq m-1} H_i(S) \enspace ,
\end{equation}
where $p'(y', y) = p(y', y)/p_l(|y|)$.
As our algorithm finds a parsing $Y_S$ which minimizes the $\sum_{y \in Y'_S} \log p'(y', y)$
the~\eqref{eq:lem_alg_first} must hold for $Y_S$.
Hence, by repeating the reasoning from the proof of Theorem~\ref{thm:mean_entropy_2},
we get the first inequality.
Again, the second inequality follows from the fact that
$\log \sigma \geq H_0(S) \geq H_1(S) \geq  \cdots \geq  H_i(S)$ for every $i$.
\end{proof}

\subsection{Implementation details}
To realize the succinct sums we use $rrr$-vector from sdsl library~\cite{gbmp2014sea}.
With the use of $rrr$-vector we encode:
Huffman codes lengths for each phrase of parsing,
(we subtract minimum code length before applying this structure);
and sorted Huffman codes ($C$).

We use (difference) delta-encoded vectors with random access from sdsl library (enc\_vector),
different phrases ($P$) and list $L$.

For structure on lengths of phrases (which allows to get phrase containing $i$-th letter)
we develop our own structure: we build an array of $|S|/d$ elements $Z$
where $Z[j]$ is the index of factor containing $j$-th letter,
similarly we store offsets $O[j]$, which gives position of $j$-th letter in factor.
$Z$ is encoded with enc\_vector, $O$ is an array of bit-packed integers of $\log m$ bits.
Note that instead of storing smaller structure for $j$ values not being multiplies of $d$
we can just decode a few factors to the right---thus we have space/time tradeoff for a parameter $d$.
Observe that setting $d = \Theta(m)$ gives a solution which
on average, have constant query time
(when average factor length is $\Ocomp(m)$ which, as the experiments show, is the case in practice).

The bottleneck of our solution is the $rrr$-vector structure,
thus improving this part should give much better query times.

On the side note one could save additional space by encoding factors of different lengths separately,
as now it seems like we are storing the information on lengths of factors twice
(both in a entropy of parsing and in external structure).

\end{document}